\newcommand*{\QEDB}{\null\nobreak\hfill\ensuremath{\square}}%
\newcommand*{\R}{\mathbb{R}}%
\DeclareMathOperator {\He}{He}
\DeclareMathOperator {\dom}{dom}
\DeclareMathOperator {\rk}{rank}
\newtheorem{myremark}{Remark}
\newtheorem{myassumptions}{Assumption}
\newtheorem{mytheorem}{Theorem}
\newtheorem{mydefinition}{Definition}
\newtheorem{myproposition}{Proposition}
\newtheorem{property}{Property}
\newtheorem{problem}{Problem}
\renewcommand{\Re}{\mathbb{R}}
\renewcommand{\epsilon}{\in}
\def\T {\text{T}}
\begin{document}

\title{Output Regulation of Linear Aperiodic Sampled-Data Systems}
\author{Himadri Basu, Francesco Ferrante, Se Young Yoon\thanks{H. Basu is with the Department of Electrical and Biomedical Engineering, University of Vermont, Burlington, VT 05452, USA. Email: himadri.basu@uvm.edu, Francesco Ferrante is with the Department of Engineering, University of Perugia Italy. Email: francesco.ferrante@unipg.it, S. Y. Yoon is with Department of Electrical and Computer Engineering, University of New Hampshire, Durham, NH 03824, USA. Email: SeYoung.Yoon@unh.edu}\thanks{Research by Francesco Ferrante is funded in part by ANR via project HANDY, number ANR-18-CE40-0010.}}

\maketitle
\begin{abstract}
This paper deals with the output regulation problem of a linear time-invariant system in the presence of sporadically available measurement streams. A regulator with a continuous intersample injection term is proposed, where the intersample injection is provided by a linear dynamical system and the state of which is reset with the arrival of every new measurement updates. The resulting system is augmented with a timer triggering an instantaneous update of the new measurement and the overall system is then analyzed in a hybrid system framework. With the Lyapunov based stability analysis, we offer sufficient conditions to ensure the objectives of the output regulation problem are achieved under intermittency of the measurement streams. Then, from the solution to linear matrix inequalities, a numerically tractable regulator design procedure is presented. Finally, with the help of an illustrative example, the effectiveness of the theoretical results are validated.   
\end{abstract}


\section{Introduction}
The objective of an output regulation problem is to control a given output of the plant to asymptotically track a prescribed reference trajectory and rejecting asymptotically undesired disturbances, both of which are generated by an exosystem, while keeping all the trajectories of the system bounded \cite{14,Astolfi5}. In contrast to \cite{14}, where the measured output of the plant was assumed to be continuous, in this work we consider that the output measurement streams are only available sporadically. Owing to the intermittent availability of the measured plant output, the classical output regulation theory based on the internal model principle in \cite{8} is not applicable.  

Output regulation problem for linear networked control systems with measurement intermittency has been addressed in the works of \cite{Astolfi5}, where the impulsive updates of the latest measurements from the plant are subjected to a zero-order holding device. With the assumption that the impulsive new measurement updates are held constant until the measurement arrives, the results of \cite{Astolfi5} are then extended to the case of minimum phase nonlinear systems in \cite{Astolfi6}. Output regulation problem with periodically sampled measurement updates for linear systems are studied by the authors of \cite{Lawrence9}, \cite{Fujioka10}. The works in \cite{Lawrence9, Fujioka10, Antunes11} do not take into account the intersampling behavior and phenomenon like uncertain time-varying transmission and scheduling are neglected. While these issues are addressed in the context of output regulation problems for linear networked control systems in \cite{Astolfi5}, the proposed continuous-discrete regulator therein keeps the received plant measurement constant in between the sampling times, which is a restrictive requirement.

\subsection{Contribution}
In this paper, we study the output regulation problem for LTI systems with aperiodically sampled measurements. The main contribution of the paper consists of showing how the ``pre-processing'' and ``post-processing'' architectures in \cite{Wang9} can be adapted by including suitable ``hybrid extensions'' to assure asymptotic output regulation in the presence of intermittent output measurements. For the two architectures, sufficient conditions in the form of matrix inequalities are provided. The two  architectures are compared in terms of design complexity. In particular, we demonstrate that by opting for a ``post-processing paradigm'', the proposed design conditions are easier to handle from numerical standpoint as opposed to its counterpart, thereby offering greater design simplicity.


The remainder of the paper is organized in the following manner. First, in Section~\ref{sec:1} we briefly present some preliminaries on classical output regulation problem, and hybrid systems theory. The construction of the hybrid linear regulator based on pre-processing of internal model along with the hybrid modeling of overall closed loop system is proposed in Section~\ref{sec:2}. The sufficient conditions concerning the stability of the overall closed loop system with intermittent measurements are given in Section~\ref{sec:3}. The solution to the output regulation problem with the post-processing architecture is offered in Section \ref{sec:7}. A numerical example illustrating the effectiveness of the proposed solution by both approaches is given in  respective sections. Discussion on results and some concluding remarks are provided in Section \ref{sec:6}.
\subsection{Notation}
We will now introduce some notations which will be used throughout the text. The set $\mathbb{N}_{>0}$ is the set of all strictly positive integers and $\mathbb{N}=\mathbb{N}_{>0}\cup\{0\}$. The set of $\mathbb{R}_{>0}$\ (or $\mathbb{R}_{>=0}$) represent the set of positive (or non-negative) real numbers. $\mathbb{R}^{n\times{m}}$ represent the set of all real matrices with order $n\times{m}$. $I$ and $0$ are respectively the identity and null matrices of appropriate dimensions. A square, symmetric matrix $A=A^{\T}$ with $A\geq{0}$\ (or $A>0$) implies that the matrix $A$ is a positive semi-definite (or a positive definite) and equivalently $-A$ is a negative semi-definite (or negative definite) matrix. Given square matrices $A_1,\ A_2,\ ,\cdots,\ A_N$ of compatible dimensions, $A=\text{blk diag}(A_1,A_2,\cdots,A_N)$ denotes a block diagonal matrix with the $i^{\text{th}}$ diagonal element being $A_i$ and $\text{col}(A_1,A_2,\cdots,A_N)=\begin{bmatrix}A^{\T}_1,A^{\T}_2,\cdots,A^{\T}_{N}\end{bmatrix}^{\T}$. For a square matrix $A$, $\sigma(A)$ is the spectrum of $A$, ${\He}(A)=A+A^{\T}$, $\det(A)$ is the determinant and product of all eigenvalues of $A$. For $x,y\in\Re^{N}$, $\|x\|$ denotes the Euclidean norm of vector $x$, $\text{col}(x,y)=[x^{\T},y^{\T}]^{\T}$, $\langle{x,y}\rangle$ is the standard inner product. Given a vector $x\in\Re^{N}$ and a nonempty set $\mathcal{A}\subset\Re^{N}$, the distance of $x$ to $\mathcal{A}$ is defined as $|x|_{\mathcal{A}}=\inf_{y\in\mathcal{A}}\|x-y\|$. 
\subsection{Preliminaries on hybrid Systems}\label{subsec:hy}
We consider a hybrid system with state $x\in\mathbb{R}^{n}$ of the form
\begin{equation}\label{eq:hy}
\mathcal{H}\begin{cases}
\dot{x}\ =\ f(x),~ x\in C\\
x^{+}\ \in\ G(x),~ x\in D,   
\end{cases}
\end{equation}
where the shorthand notation $\mathcal{H}=\{C,f,D,G\}$ comprises of the flow set $C$, flow map $f$, jump set $D$, and jump map $G$. A set $E\subset\mathbb{R}_{>=0}\times \mathbb{N}$ is said to be a hybrid time domain if it is the union of finite or infinite sequence of intervals $[t_j,t_{j+1})\times\{j\}$. A function $\phi:\text{dom}\ \phi\mapsto\mathbb{R}^{n}$ is a hybrid arc if $\text{dom}\ \phi$ is a hybrid time domain with $\phi(\cdot\ ,\ j)$ being locally absolutely continuous for each $j$. A solution to $\mathcal{H}$ is said to be \textit{complete} if its domain is unbounded and \textit{maximal}, and if it is not the truncation of another solution \cite{Ferrante12}. Given a hybrid system $\mathcal{H}$ in \eqref{eq:hy}, we say that $\mathcal{H}$  satisfies hybrid basic conditions \cite{Goebel13,Ferrante12}, if $C$ and $D$ are closed in $\mathbb{R}^{n}$, $f:\mathbb{R}^{n}\to\mathbb{R}^{n}$ is continuous and $G\rightrightarrows\mathbb{R}^{n}$ is locally bounded, nonempty, outer semicontinuous relatively in $D$. 

The following notion of global exponential stability is used in the paper.
\begin{mydefinition}[\cite{teel2012lyapunov}]
Given the set $\mathcal{A}$ be closed. The set $\mathcal{A}$ is said to be globally exponentially stable (GES) for $\mathcal{H}$ if there exists strictly positive real numbers $\lambda$ and $k$ such that for any initial condition every maximal solution $\phi$ to $\mathcal{H}$ is complete and satisfies for all $(t,j)\in\text{dom}\ \phi$
\begin{equation}\label{eq:16}
|\phi(t,j)|_{\mathcal{A}}\leq k\ e^{-\lambda(t+j)}|\phi(0,0)|_{\mathcal{A}}.
\end{equation}
\end{mydefinition}
 
\section{Problem Formulation}\label{sec:1}
Consider a linear time-invariant plant of the form
\begin{equation}
\begin{aligned}\label{eq:plant}
\mathcal{P}\begin{cases}
\dot{x}_{p} &= A_{p}x_p+B_{p}u+E_{p}w,\\
y_{p} &= C_{p}x_p,\\
e_{p} &= y_{p}- y_{w},\\
y_w &= F_{p}w,
\end{cases}
\end{aligned}
\end{equation}
with $x_{p}\in\Re^{n_p},~u\in\Re^{m_p},~y_p,~e_p\in\Re^{p}$ being respectively the state, control law to be designed, measured and regulated output of the plant which we aim to regulate to zero. The exogenous signal $w\in\Re^{q}$ is generated by an exosystem of the form 
\begin{equation}\label{eq:2}
\dot{w}=Sw,
\end{equation}
where the exosystem matrix $S$ is assumed to be neutrally stable, \emph{i.e.} $S$ has all eigenvalues on the imaginary axis. While $S$ is perfectly known, the exosystem state $w$ in \eqref{eq:2} not directly available for feedback design. The matrices $A_p, B_p, E_p, C_p$ and $F_p$ in \eqref{eq:plant} are constant matrices of appropriate dimensions and such that the pair $(A_p,C_p)$ is detectable. The output $y_p$ is available only at some isolated time instances $t_k, k\in{\mathbb{N}}_{>0}$, not known a priori. We assume that the sequence $\{t_k\}_{k=1}^{\infty}$ are strictly increasing, $t_k\to\infty$ as $k\to\infty$ and there exist two positive scalars $T_1$ and $T_2$ which uniformly bounds the consecutive intersampling intervals $[t_k,t_{k+1}]$ of $\{t_k\}$ as follows
\begin{equation}\label{eq:3}
0\leq{t_1}\leq{T_2},~T_1\leq{t_{k+1}-t_k}\leq{T_2},~\forall{k}\in\mathbb{N}_{>0}.
\end{equation} 
As noted in \cite{Ferrante3}, the strictly positive lower bound $T_1$ prevents the existence of accumulation points in the sequence $\{t_k\}_{k=1}^{\infty}$ and thus avoids zero behaviors. On the other hand, $T_2$ defines the maximum allowable transfer time (MATI) \cite{Postoyan7}.
\subsection{Preliminaries on Linear Output Regulation Problem}\label{sec:prl}
We now define the objectives of the output regulation problem. For system \eqref{eq:plant}, the output regulation problem is said to be solved if the following two conditions are satisfied.
\begin{enumerate}
\item When $w=0$, all the trajectories of the closed loop system exponentially converge to zero, \emph{i.e.} the origin of the unperturbed closed loop system $(w=0)$ is exponentially stable.
\item When $w\neq{0}$, the trajectories of the closed loop system \eqref{eq:plant} are internally stable and the regulated output signal $e_p(t)$ exponentially converges to zero, \emph{i.e.} $\lim_{t\to\infty}e_p(t)=0$.
\end{enumerate}

We now consider the following two assumptions which are required to guarantee the solvability of the classical output regulation problem \cite{14,8}.
\begin{myassumptions}\label{as:1}
The matrix pair $(A_p,B_p)$ is stabilizable and $(A_p,C_p)$ is detectable.\hfill $\diamond$
\end{myassumptions}

\begin{myassumptions}\label{as:2}
The matrix $\begin{bmatrix}A_p-\lambda{I} & B_p\\C_p & 0\end{bmatrix},\ \lambda\in\sigma(S)$ is of full rank or equivalently there exists a unique solution pair $(X_p,R)$ to the following linear regulator equation
\begin{equation}\label{eq:4}
\begin{aligned}
X_pS &=A_pX_p+B_p{R}+E_p,\\
0 &=C_pX_p-F_p,
\end{aligned}
\end{equation} 
where the matrix $X_p$ uniquely defines the steady state $x_p=X_pw$ on which the regulated output $e_{p}=0$. Additionally, the steady state input $u=Rw$ renders the given manifold $x_p=X_pw$ positively invariant. \hfill $\diamond$
\end{myassumptions}

As noted in \cite{14}, under Assumptions \ref{as:1}, \ref{as:2}, the output regulation problem for the plant \eqref{eq:plant} is solvable by the dynamic error feedback control of the form 
\begin{equation}
\begin{aligned}\label{eq:5}
u &={K}z,~
\dot{z} &=\mathcal{G}_{1}z+\mathcal{G}_{2}e_p,
\end{aligned}
\end{equation}
where $z\in\Re^{n_z}$ is the regulator state to be specified later and the constant controller gain matrices $\mathcal{G}_1$ and $\mathcal{G}_{2}$ of appropriate dimensions are defined as
\begin{equation}\label{eq:6}
\begin{aligned}
\mathcal{G}_{1} &=T\begin{bmatrix}S_1 & S_2\\S_3 & G_1\end{bmatrix}{T^{-1}},~\mathcal{G}_{2}=T\begin{bmatrix}S_4\\G_2\end{bmatrix},\\
G_1 &=\text{blk diag}(\beta_1,\beta_2,\cdots,\beta_p),\\
{G}_2 &=\text{blk diag}(\gamma_1,\gamma_2,\cdots,\gamma_p),
\end{aligned}
\end{equation}
with $\sigma(G_1)=\sigma(S)$, and $\beta_i,\ \gamma_i$ respectively being a constant square matrix and a column vector of dimension $d_i\in{\mathbb{N}}_{>0}$ such that the pair $(\beta_i,\gamma_i)$ is controllable and the matrix $\hat{A} =\begin{bmatrix}A & BK\\\mathcal{G}_{2}C& \mathcal{G}_1\end{bmatrix}$ is Hurwitz. The matrices $S_1,\ S_2,\ S_3,\ S_4$ are arbitrary constant matrices of appropriate dimensions and $T\in\Re^{n_z\times{n_z}}$ is any nonsingular matrix. Therefore, when the signals $y_p$ and $y_w$ are continuously measured, the requirements for the solvability of the output regulation problem are said to be satisfied with Assumptions \ref{as:1} and \ref{as:2}.

\section{Solution Outline by Pre-processing architecture of Internal Model}\label{sec:2}
\subsection{Proposed Controller} Since the output of the plant is available sporadically, we propose a control scheme, depicted in Fig. \ref{fig.1}, constituted by a preprocessing of an internal model $\mathcal{G}$ of the exosystem \cite{Bin14}, stabilizing controller $\mathcal{K}$, and a holding device $\mathcal{J}$. In the proposed control scheme, the plant $\mathcal{P}$ along with the internal model of the exosystem $\mathcal{G}$, viewed together as an extended continuous plant $\hat{\mathcal{P}}$ is stabilized by a dynamic controller $\mathcal{K}$ which relies on the continuous regulated error signal $\theta$ generated by the holding device $\mathcal{J}$. The hold device $\mathcal{J}$ receives the intermittent regulated error signal $e(t_k)$ available from the output of the sampler $\mathcal{S}$ at every nonuniform time instants $t_k$. 

With a little abuse of notation from \eqref{eq:5}, the continuous-time internal model controller $\mathcal{G}$ and its input to the plant $\mathcal{P}$ is given as follows
\begin{equation}\label{eq:int_model}
\mathcal{G}\left\{
\begin{aligned}
&\dot{z}=G_{1}z+G_{2}v\\
&u= Kz
\end{aligned}\right.
\end{equation} 
where $K\in\mathbb{R}^{m_{p}\times{n}_{z}}$ is a controller gain matrix of the extended plant $\hat{\mathcal{P}}$, the internal model controller pair $(G_1,G_2)$ are defined in \eqref{eq:6}, and the continuous-time signal $v(t)\in\mathbb{R}^{n_v}$ is the output of the stabilizer, defined next. 
\begin{equation}\label{eq:stab}
\mathcal{K}
\begin{cases}
\dot{x}_{c}= A_c{x_c} + B_c{\theta},\\
v \ = C_c{x_{c}} + D_c{\theta},
\end{cases}
\end{equation}
where $x_{c}\in\mathbb{R}^{n_p+n_z}$ is the stabilizer state and $\theta\in\mathbb{R}^{p}$ is the state of the holding device $\mathcal{J}$. From the controller state $x_{c}(t)$ and last received measurement of the regulated output $e_{p}(t)$, the holding device $\mathcal{J}$ generates an intersample signal to feed the stabilizer $\mathcal{K}$. For all $k\in\mathbb{N}_{>0}$ The dynamics of the holding device $\mathcal{J}$ is given as follows
\begin{equation}\label{eq:theta}
\mathcal{J}
\begin{cases}
\dot{\theta}(t)= H\theta + Ex_c,\ \text{if}\,\,t\neq\ t_k,\\
\theta(t^+)= e_p(t),\hspace{4ex}\ \text{if}\,\,t=t_k.
\end{cases}
\end{equation} 
The arrival of new measurements instantaneously updates $\theta(t)$ to $e_p(t)$ and in between updates the holding state $\theta$ evolves according to the continuous dynamics of \eqref{eq:theta}. 

In the next section, we present the hybrid modeling of the overall closed loop system in the presence of sporadic measurements of the regulated output $e_{p}(t)$. But, to simplify our analysis in the modeling stage, we first transform coordinates of the plant state $x_p$, internal model state $z$, and holding state $\theta$ as follows. 
\begin{figure}
\centering
\includegraphics[width=0.48\textwidth]{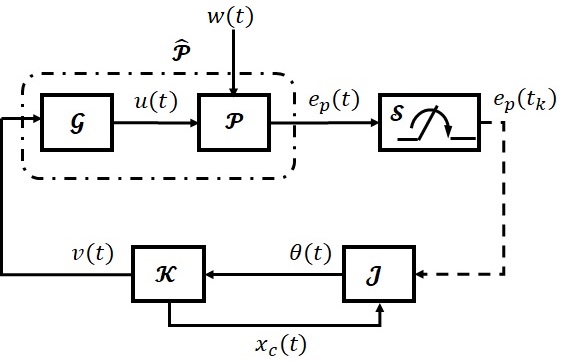}
\caption{{\label{fig.1}} Schematic representation of the closed loop system with sporadic measurements under ``pre-processing'' architecture. Continuous-time signals are marked with solid arrows, while the sporadic measurements are with dashed arrows.}\vspace{-1ex}
\end{figure}
\begin{equation}\label{eq:coord}
\tilde{x}_p=x_p-X_p{w},\ \tilde{z}=z-Zw,\ \tilde{\theta}=\theta-e_p,
\end{equation}
where $X_p$ is a solution to the linear regulator equation \eqref{eq:4}, and $Z\in\mathbb{R}^{q\times{n_z}}$ is a transformation matrix which by virtue of internal model principle \cite{14} satisfies
\begin{equation}\label{eq:Z}
ZS=G_1{Z},\ KZ=R
\end{equation}
with $G_1$ and $K$ given in \eqref{eq:int_model}.
\subsection{Hybrid modeling}
The hybrid closed-loop system with state $\tilde{x}=\text{col}(\tilde{x}_p,\tilde{z},x_c)$\ $\in\mathbb{R}^{2(n_p+n_z)}$ and jumps in $\theta(t)$, depicted in Fig. \ref{fig.1}, is described as follows
\begin{equation}\label{eq:cl}
\begin{aligned}
& \dot{\tilde{x}} = \mathbb{A}\ \tilde{x} + \mathbb{B}\ \tilde{\theta},~\dot{\tilde{\theta}} = H\tilde{\theta} + \mathbb{J}\ \tilde{x},~t\neq{t_k},\\
& \tilde{x}(t^+) =\tilde{x},~\tilde{\theta}(t^+)=0,\hspace{10ex}t=t_k,
\end{aligned}
\end{equation} 
where 
\begin{align*}
\mathbb{A} &=\begin{bmatrix}A+BD_c{C} & B{C_c}\\B_c{C} & A_c\end{bmatrix}, \mathbb{B}=\begin{bmatrix}B{D_c}\\B_c\end{bmatrix},\nonumber\\ 
\mathbb{J} &=\begin{bmatrix}HC-C{A} & E\end{bmatrix},\ A=\begin{bmatrix}A_p & B_p{K}\\0 & G_1\end{bmatrix},\\
B &=\begin{bmatrix}0\\G_2\end{bmatrix}, C=\begin{bmatrix}C_p & 0\end{bmatrix}.
\end{align*}

Similar to \cite{Ferrante3}, we now introduce a timer variable $\tau$ which keeps track of the duration of flows and triggers a jump when certain condition is violated. Therefore, from \cite{Ferrante3,Ferrante12}, $\tau$ is made to decrease as ordinary time increases satisfying \eqref{eq:3} and it resets to any point in $[T_1,T_2]$ when $\tau$ reaches $0$. The overall closed loop system composed of the states $\tilde{x},\ \tilde{\theta},\ \tau$ can then be represented by the following hybrid system 
\begin{equation}\label{eq:14}
\mathcal{H}_{cl}\begin{cases}
\dot{\tilde{\mathbf{x}}} &=f(\tilde{\mathbf{x}}),~\tilde{\mathbf{x}}\in {C},\\
\tilde{\mathbf{x}}^{+} &=G(\tilde{\mathbf{x}}),~\tilde{\mathbf{x}}\in{D},
\end{cases}
\end{equation}

\subsection{Problem Statement}\label{sub:prob_state}
To solve the output regulation problem, we introduce the compact set
\begin{equation}
    \label{eq:A}
    \mathcal{A}\coloneqq \{0\}\times\{0\}\times[0, T_2]\subset\mathbb{R}^{n_p+n_z+1}
\end{equation}
and the design the controller so that $\mathcal{A}$ is globally exponentially stable for hybrid system \eqref{eq:14}.

\begin{problem}
Given the extended plant model $\hat{\mathcal{P}}$ composed of $\mathcal{P}$ in \eqref{eq:plant} and internal model $\mathcal{G}$ in \eqref{eq:int_model}, design control $\Delta_{\mathcal{K}}$ and hold parameter $\Delta_{\mathcal{J}}$
\begin{equation}
\Delta_{\mathcal{K}}=\left[\begin{array}{c|c}A_c & B_c \\  \hline C_c & D_c\end{array}\right],\ \Delta_{\mathcal{J}}=\left[\begin{array}{c|c} H & E\end{array}\right]
\end{equation}
such that the set $\mathcal{A}$ defined in \eqref{eq:A} is GES for the hybrid system $\mathcal{H}_{cl}$.\hfill$\diamond$
\label{prob:Design}
\end{problem}
\section{Main Results}\label{sec:3}
In this section, we provide sufficient stability conditions to solve Problem~\ref{prob:Design}. To this end, following \cite{Ferrante12}, we introduce the following property, whose role is clarified later in Theorems~\ref{th:1} and \ref{th:2}. 
 we consider the fo Lyapunov function $V(\tilde{\mathbf{x}})$ for $\mathcal{H}_{cl}$ in the form as $\mathbf{V}(\tilde{\mathbf{x}})=W_1(\tilde{x})+W_{2}(\tilde{\theta},\tau)$, where $W_{1}(\tilde{x})=\tilde{x}^{\T}P_{1}\tilde{x}, W_{2}(\tilde{\theta},\tau)=e^{\delta\tau}{\tilde{\theta}}^{\T}P_2\tilde{\theta}$ with $P_1\in\mathbb{R}^{2(n_p+n_z)\times{2(n_p+n_z)}},\ P_2\in\mathbb{R}^{p\times{p}}>0$ and $\tau\in [0,T_2]$. Take $\chi_1=\min(\lambda_{\min}(P_1),\lambda_{\min}(P_2))$, $\chi_2=\max(\lambda_{\max}(P_1),e^{\delta{T_2}}\lambda_{\max}(P_2))$, then 
 \begin{equation}\label{eq:chi}
 \chi_{1}|\tilde{\mathbf{x}}|_{\mathcal{A}}^{2}\leq{\mathbf{V}(\tilde{\mathbf{x}})}\leq\chi_{2}|\tilde{\mathbf{x}}|_{\mathcal{A}}^{2}.
 \end{equation}
\begin{property} 
Consider positive definite continuously differentiable functions $W_{1}(\tilde{x})$ and $W_2(\tilde{\theta},\tau)$ with flow maps in \eqref{eq:14}. There exist positive definite functions $\rho_1\in\mathbb{R}^{2(n_p+n_z)}\to\mathbb{R}$, $\rho_{2}\in\mathbb{R}^{p}\to\mathbb{R}$, $\omega_{1}\in\mathbb{R}^{p}\to\mathbb{R}$, $\omega_{2}\in\mathbb{R}^{2(n_p+n_z)}\to\mathbb{R}$, positive scalars $k_1,~k_2$ such that $\forall\ (\tilde{x},v_1)\in\mathbb{R}^{2(n_p+n_z)+p}$, and $(\tilde{\theta},\tau,v_2)\in\mathbb{R}^{p}\times{[0,T_2]}\times\mathbb{R}^{2(n_p+n_z)}$ we have
\begin{align}
\langle\ {\nabla W_{1}(\tilde{x}),\mathbb{A}\tilde{x}+\mathbb{B}v_1}\ \rangle &\leq{-\rho_{1}}(\tilde{x})+\rho_{2}(v_{1}),\label{eq:19}\\
\langle\ {\nabla W_{2}({\tilde{\theta}},\tau), (H\tilde{\theta}+\mathbb{J}{v}_{2}}, -1)\ \rangle &\leq{-\omega_{1}}(\tilde{\theta})+\omega_{2}(v_{2}),\label{eq:20}\\
-\rho_{1}(\tilde{x})+\omega_{2}(\tilde{x}) & \leq{-k_1}\|\tilde{x}\|^2,\label{eq:21}\\
-\omega_{1}(\tilde{\theta})+\rho_{2}(\tilde{\theta}) & \leq{-k_2}\|\tilde{\theta}\|^2\label{eq:22}
\end{align}
\hfill$\diamond$
\end{property}
\begin{mytheorem}[\cite{Ferrante12}]\label{th:1}
Let \textit{Property 1} hold. Then the set $\mathcal{A}$ in \eqref{eq:A} is globally exponentially for the hybrid system $\mathcal{H}_{cl}$. \QEDB
\end{mytheorem}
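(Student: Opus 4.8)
The plan is to use the candidate Lyapunov function $\mathbf{V}(\tilde{\mathbf{x}})=W_1(\tilde{x})+W_2(\tilde{\theta},\tau)$ together with the quadratic sandwich bound \eqref{eq:chi}, and to show that $\mathbf{V}$ decreases exponentially along flows while not increasing across jumps; the GES estimate \eqref{eq:16} then follows once the continuous-time decay is promoted to a decay in $t+j$ using the dwell-time bound \eqref{eq:3}.

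First I would analyze the flow. The interconnection structure of \eqref{eq:cl} means that, along solutions in $C$, the input to the $\tilde{x}$-subsystem is $v_1=\tilde{\theta}$ and the input to the $(\tilde{\theta},\tau)$-subsystem is $v_2=\tilde{x}$, with $\dot{\tau}=-1$. Differentiating $\mathbf{V}$ and applying \eqref{eq:19} and \eqref{eq:20} with these choices gives $\langle\nabla\mathbf{V},f\rangle\leq-\rho_1(\tilde{x})+\rho_2(\tilde{\theta})-\omega_1(\tilde{\theta})+\omega_2(\tilde{x})$. Regrouping the cross-supply terms and invoking the dissipation inequalities \eqref{eq:21} and \eqref{eq:22} collapses the right-hand side to $-k_1\|\tilde{x}\|^2-k_2\|\tilde{\theta}\|^2$. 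Since $\mathcal{A}=\{0\}\times\{0\}\times[0,T_2]$ in \eqref{eq:A}, the $\tau$-component does not contribute to the distance, so $|\tilde{\mathbf{x}}|_{\mathcal{A}}^2=\|\tilde{x}\|^2+\|\tilde{\theta}\|^2$; with $k=\min(k_1,k_2)$ and the upper bound in \eqref{eq:chi} this yields the strict flow decrease $\langle\nabla\mathbf{V},f\rangle\leq-k\,|\tilde{\mathbf{x}}|_{\mathcal{A}}^2\leq-c_1\mathbf{V}$, where $c_1=k/\chi_2>0$.

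Next I would treat the jumps. Under $G$ in \eqref{eq:14} the state $\tilde{x}$ is unchanged, $\tilde{\theta}$ is reset to $0$, and $\tau$ is reset into $[T_1,T_2]$. Because $W_2(0,\tau^{+})=e^{\delta\tau^{+}}\,0^{\T}P_2\,0=0$ while $W_1$ is untouched, the jump satisfies $\mathbf{V}(G(\tilde{\mathbf{x}}))=W_1(\tilde{x})\leq W_1(\tilde{x})+W_2(\tilde{\theta},\tau)=\mathbf{V}(\tilde{\mathbf{x}})$, so $\mathbf{V}$ never increases across a jump. Combining this with the flow estimate and integrating along any maximal solution over its hybrid time domain produces $\mathbf{V}(\phi(t,j))\leq e^{-c_1 t}\mathbf{V}(\phi(0,0))$.

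The remaining and most delicate step is to convert this purely continuous-time decay into the $(t+j)$-decay required by \eqref{eq:16}. Here I would exploit the uniform lower dwell-time: by \eqref{eq:3} consecutive jumps are at least $T_1$ apart, so at hybrid time $(t,j)$ one has $t\geq(j-1)T_1$, equivalently $t\geq\tfrac{T_1}{T_1+1}(t+j)-\tfrac{T_1}{T_1+1}$. Substituting this lower bound into $e^{-c_1 t}$, combining with both inequalities in \eqref{eq:chi}, and taking square roots gives $|\phi(t,j)|_{\mathcal{A}}\leq k\,e^{-\lambda(t+j)}|\phi(0,0)|_{\mathcal{A}}$ with $\lambda=\tfrac{c_1 T_1}{2(T_1+1)}$ and $k=\sqrt{\chi_2/\chi_1}\,e^{\lambda}$. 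Completeness of every maximal solution follows from the hybrid basic conditions together with the linear flow map and the Lyapunov bound, which rule out finite escape times, while the reset $\tau^{+}\in[T_1,T_2]$ ensures a solution can always either flow or jump. I expect the main obstacle to be the dwell-time bookkeeping in this last conversion rather than any individual estimate, since the flow and jump steps are direct consequences of \emph{Property 1}.
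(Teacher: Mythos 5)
Your proposal is correct and follows essentially the same route as the paper's proof: the same composite Lyapunov function $\mathbf{V}=W_1+W_2$ with the sandwich bound \eqref{eq:chi}, the same use of \eqref{eq:19}--\eqref{eq:22} to obtain $\langle\nabla\mathbf{V},f\rangle\leq-\min(k_1,k_2)\,|\tilde{\mathbf{x}}|^2_{\mathcal{A}}\leq-(\chi_3/\chi_2)\mathbf{V}$ on flows, non-increase of $\mathbf{V}$ at jumps since $\tilde{\theta}$ resets to zero, and the dwell-time bound $t\geq(j-1)T_1$ to convert the $t$-decay into a $(t+j)$-decay with $\lambda=\tfrac{\chi_3 T_1}{2\chi_2(T_1+1)}$ and $k=\sqrt{\chi_2/\chi_1}\,e^{\lambda}$. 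You are in fact slightly more explicit than the paper on the dwell-time bookkeeping and on completeness of maximal solutions, but the argument is the same.
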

 \begin{proof}
 {From \eqref{eq:chi}, we observe that the Lyapunov function $\mathbf{V}(\tilde{\mathbf{x}})$ is bounded between two monotonically increasing functions. Next, for each $\tilde{\mathbf{x}}\in{D}$ and a scalar $v_3\in [0,T_2]$ with $\bar{g}(\tilde{\mathbf{x}})=(\tilde{x},0,v_3)\in G(\tilde{\mathbf{x}})$ we have
 \begin{equation}\label{eq:23}
 \mathbf{V}(\bar{g})\!-\!\mathbf{V}(\tilde{\mathbf{x}})\!=\!W_2(0,\!v_3)-W_2(\tilde{\theta},\!0)\!\leq\!\!{-\lambda_{\max}(\!P_2)}\|\tilde{\theta}\|^2.
 \end{equation}
 On the other hand, by evaluating $\dot{\mathbf{V}}=\langle {\nabla \mathbf{V}(\tilde{\mathbf{x}}), f(\tilde{\mathbf{x}})} \rangle$ along flow directions in \eqref{eq:14} and by virtue of equations \eqref{eq:19} - \eqref{eq:22}, we obtain
 \begin{align}
 & \langle {\nabla \mathbf{V}(\tilde{\mathbf{x}}), f(\tilde{\mathbf{x}})} \rangle \!=\! \langle {\nabla W_1,\mathbb{A}\tilde{x}\!+\!\mathbb{B}\tilde{\theta}} \rangle \!+\! \langle {\nabla {W}_2, \mathbb{J}\tilde{x}\!+\!H\tilde{\theta}} \rangle\nonumber\\
 & \leq -\rho_{1}(\tilde{x})+\rho_{2}(\tilde{\theta})-\omega_{1}(\tilde{\theta})+\omega_{2}(\tilde{x})\nonumber\\
 & \leq -k_1(\|\tilde{x}\|^2)-k_2(\|\tilde{\theta}\|^2)\leq -\chi_3\ |\tilde{\mathbf{x}}|^2_{\mathcal{A}},\label{eq:24}
 \end{align}
 where $\chi_3=\min(k_1,k_2)$. From \eqref{eq:chi}, $\dot{\mathbf{V}}$ in \eqref{eq:24} yields $\dot{\mathbf{V}}(\tilde{\mathbf{x}})\leq -\dfrac{\chi_3}{\chi_2}\ \mathbf{V}(\tilde{\mathbf{x}})$ for all $\tilde{x}\in C$, and therefore, thanks to \eqref{eq:23}, $\forall\ (t,j)\in \text{dom}\ \phi_{cl}$, 
 \begin{equation}
 \mathbf{V}(\phi_{cl}(t,j)) \leq e^{-\dfrac{\chi_3}{\chi_2}t} \mathbf{V}(\phi_{cl}(0,0)),\nonumber
 \end{equation}
 or equivalently,
 \begin{align}
 |\phi_{cl}(t,j)|_{\mathcal{A}} &\leq \sqrt{\dfrac{\chi_{2}}{\chi_{1}}} \ e^{-\dfrac{\chi_3}{2\chi_2}t}|\phi_{cl}(0,0)|_{\mathcal{A}}.
 \end{align} 
 As a result, the conditions of global exponential stability \eqref{eq:16} are satisfied with $\lambda\in \bigg(0,\dfrac{\chi_{3}T_1}{2\chi_{2}(1+T_1)}\bigg]$ and $k_{cl}=\sqrt{\dfrac{\chi_2}{\chi_1}}e^{\omega},\ \omega\geq\lambda$ and hence the set $\mathcal{A}$ is globally exponentially stable with respect to $\mathcal{H}_{cl}$.}
 \end{proof}

\begin{mytheorem}
\label{th:2}
If there exist symmetric positive definite matrices $P_3,P_4\in\mathbb{R}^{2(n_p+n_z)\times{2(n_p+n_z)}}$, $P_5, P_6\in\mathbb{R}^{p\times{p}}$, and matrices $A_c\in\mathbb{R}^{(n_p+n_z)\times{(n_p+n_z)}}$, $B_c\in\mathbb{R}^{(n_p+n_z)\times{p}}$, $C_c\in\mathbb{R}^{n_v\times{(n_p+n_z)}}$, $D_c\in\mathbb{R}^{n_v\times{p}}$, $H\in\mathbb{R}^{p\times{p}}$, $E\in\mathbb{R}^{p\times{(n_p+n_z)}}$ be such that
\begin{align}
 & P_3-P_4  \prec 0,\label{eq:26}\\
 & P_5-P_6  \prec 0,\label{eq:27}\\
 & \mathcal{M}_{1} = \begin{bmatrix}{\He}(P_1\mathbb{A})+P_4 & P_1\mathbb{B}\\
\mathbb{B}^{\T}P_1 & -P_5
\end{bmatrix} \preceq {0},\label{eq:28}\\
 & \mathcal{M}_2(0)\preceq{0},\ \mathcal{M}_{2}(T_2)\preceq {0},\label{eq:29} 
\end{align}
where $\mathcal{M}_{2}(\tau)=\begin{bmatrix}e^{\delta\tau}\Big[{\He}(P_2H)-\delta{P_2}\Big]+P_6 & e^{\delta\tau}P_2\mathbb{J}\\
e^{\delta\tau}\mathbb{J}^{\T}P_2 & -P_3\end{bmatrix}$ with $\tau\in[0,T_2]$, then \textit{Property 1} holds. \QEDB
\end{mytheorem}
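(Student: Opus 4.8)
The plan is to exhibit the four comparison functions required by \emph{Property~1} as explicit quadratic forms built from the auxiliary matrices $P_3,\dots,P_6$, and then to show that each of the inequalities \eqref{eq:19}--\eqref{eq:22}, after substitution of the quadratic Lyapunov pieces $W_1(\tilde{x})=\tilde{x}^{\T}P_1\tilde{x}$ and $W_2(\tilde{\theta},\tau)=e^{\delta\tau}\tilde{\theta}^{\T}P_2\tilde{\theta}$, collapses to one of the matrix conditions \eqref{eq:26}--\eqref{eq:29}. Concretely I would set $\rho_1(\tilde{x})=\tilde{x}^{\T}P_4\tilde{x}$, $\rho_2(v_1)=v_1^{\T}P_5v_1$, $\omega_1(\tilde{\theta})=\tilde{\theta}^{\T}P_6\tilde{\theta}$, and $\omega_2(v_2)=v_2^{\T}P_3v_2$; the declared dimensions of $P_3,\dots,P_6$ together with $v_1\in\mathbb{R}^{p}$ and $v_2\in\mathbb{R}^{2(n_p+n_z)}$ make these well defined and positive definite.

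First I would verify \eqref{eq:19}. Since $\nabla W_1(\tilde{x})=2P_1\tilde{x}$, its left-hand side equals $\tilde{x}^{\T}\He(P_1\mathbb{A})\tilde{x}+2\tilde{x}^{\T}P_1\mathbb{B}v_1$, so \eqref{eq:19} holds if and only if the quadratic form in $\text{col}(\tilde{x},v_1)$ with matrix $\mathcal{M}_1$ is nonpositive, which is exactly \eqref{eq:28}. For \eqref{eq:20} I would take the full gradient of $W_2$ and be careful with the $\tau$-component: $\partial_{\tau}W_2=\delta e^{\delta\tau}\tilde{\theta}^{\T}P_2\tilde{\theta}$, paired with the flow rate $\dot{\tau}=-1$, contributes the term $-\delta e^{\delta\tau}\tilde{\theta}^{\T}P_2\tilde{\theta}$. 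The left-hand side of \eqref{eq:20} then reads $e^{\delta\tau}\bigl[\tilde{\theta}^{\T}(\He(P_2H)-\delta P_2)\tilde{\theta}+2\tilde{\theta}^{\T}P_2\mathbb{J}v_2\bigr]$, and the inequality holds if and only if the quadratic form with matrix $\mathcal{M}_2(\tau)$ is nonpositive for \emph{every} $\tau\in[0,T_2]$.

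The main obstacle is that \eqref{eq:29} only asserts $\mathcal{M}_2(\tau)\preceq 0$ at the two endpoints $\tau=0$ and $\tau=T_2$, whereas \emph{Property~1} demands it on the entire interval. I would close this gap with a convexity argument: writing $\mathcal{M}_2(\tau)=e^{\delta\tau}M_a+M_b$ with the constant blocks $M_a=\left[\begin{smallmatrix}\He(P_2H)-\delta P_2 & P_2\mathbb{J}\\ \mathbb{J}^{\T}P_2 & 0\end{smallmatrix}\right]$ and $M_b=\text{blk diag}(P_6,-P_3)$ exhibits $\mathcal{M}_2$ as \emph{affine} in the scalar $s=e^{\delta\tau}$. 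Since $\delta>0$, as $\tau$ ranges over $[0,T_2]$ the variable $s$ ranges over $[1,e^{\delta T_2}]$, so every $\mathcal{M}_2(\tau)$ is a convex combination $(1-\mu)\mathcal{M}_2(0)+\mu\,\mathcal{M}_2(T_2)$ for a suitable $\mu\in[0,1]$; because the negative semidefinite matrices form a convex cone, \eqref{eq:29} forces $\mathcal{M}_2(\tau)\preceq 0$ throughout, which establishes \eqref{eq:20}.

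Finally I would dispatch \eqref{eq:21} and \eqref{eq:22} directly: with the above assignments, $-\rho_1(\tilde{x})+\omega_2(\tilde{x})=\tilde{x}^{\T}(P_3-P_4)\tilde{x}$ and $-\omega_1(\tilde{\theta})+\rho_2(\tilde{\theta})=\tilde{\theta}^{\T}(P_5-P_6)\tilde{\theta}$, so the strict inequalities \eqref{eq:26} and \eqref{eq:27} yield \eqref{eq:21} and \eqref{eq:22} with $k_1=\lambda_{\min}(P_4-P_3)>0$ and $k_2=\lambda_{\min}(P_6-P_5)>0$. This verifies all four requirements of \emph{Property~1}.
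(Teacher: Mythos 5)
Your proposal is correct and follows essentially the same route as the paper: the same quadratic choices $\rho_1=\tilde{x}^{\T}P_4\tilde{x}$, $\rho_2=v_1^{\T}P_5v_1$, $\omega_1=\tilde{\theta}^{\T}P_6\tilde{\theta}$, $\omega_2=v_2^{\T}P_3v_2$, the same reduction of \eqref{eq:19} and \eqref{eq:20} to the quadratic forms with matrices $\mathcal{M}_1$ and $\mathcal{M}_2(\tau)$, and the same endpoint-convexity argument in $s=e^{\delta\tau}$ to extend \eqref{eq:29} to all $\tau\in[0,T_2]$. Your write-up is in fact slightly more explicit than the paper's on why $\mathcal{M}_2(\tau)$ is a convex combination of $\mathcal{M}_2(0)$ and $\mathcal{M}_2(T_2)$, but the substance is identical.
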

\begin{proof}
Define $\rho_1(\tilde{x})=\tilde{x}^{T}P_4\tilde{x}$, $\rho_{2}(\tilde{\theta})=\tilde{\theta}^{\T}P_5\tilde{\theta}$, $\omega_{1}(\tilde{\theta})=\tilde{\theta}^{\T}P_6\tilde{\theta}$, $\omega_{2}(\tilde{x})=\tilde{x}^{\T}P_3\tilde{x}$. Now we evaluate $\dot{W}_{1}$ as 
\begin{align}
\dot{W}_{1}(\tilde{x}) &=\langle \nabla W_{1}(\tilde{x}),\mathbb{A}\tilde{x}+\mathbb{B}\tilde{\theta} \rangle=\dot{\tilde{x}}^{\T}P_1\tilde{x}+\tilde{x}^{\T}P_1\dot{\tilde{x}}\nonumber\\
& =\tilde{x}^{\T}{\He}(P_1\mathbb{A})\tilde{x}+\tilde{\theta}^{\T}\mathbb{B}^{\T}P_1\tilde{x}+\tilde{x}^{\T}P_1\mathbb{B}\ \tilde{\theta}.
\end{align}
By virtue of \eqref{eq:28}, we thus obtain $\Omega_{1}(\tilde{x},\tilde{\theta})=\dot{W}_{1}+\tilde{x}^{\T}P_4\tilde{x}-\tilde{\theta}^{\T}P_5\tilde{\theta}=\begin{bmatrix}\tilde{x}^{\T} & \tilde{\theta}^{\T}\end{bmatrix}\mathcal{M}_{1}\begin{bmatrix}\tilde{x}\\\tilde{\theta}\end{bmatrix}\leq{0}$, which as a consequence yields \eqref{eq:19}. On the other hand, 
\begin{align}
&\dot{W}_{2}=e^{\delta\tau}\dot{\tilde{\theta}}^{\T}P_2\tilde{\theta}+e^{\delta\tau}{\tilde{\theta}}^{\T}P_2\dot{\tilde{\theta}}-\delta{e^{\delta\tau}}\tilde{\theta}^{\T}P_2\tilde{\theta}\nonumber\\
&=e^{\delta\tau}\Big[\tilde{\theta}^{\T}\Big({\He}(P_2{H})\!-\!\delta{P_2})\tilde{\theta}\!+\!\tilde{x}^{\T}\mathbb{J}^{\T}P_2\tilde{\theta}\!+\!\tilde{\theta}^{\T}P_2\mathbb{J}\tilde{x}\Big].\label{eq:31}
\end{align}
Then, $\Omega_{2}(\tilde{x},\tilde{\theta})\!=\!\dot{W}_{2}+\tilde{\theta}^{\T}P_6\tilde{\theta}-\tilde{x}^{\T}P_3\tilde{x}=\begin{bmatrix}\tilde{x}^{\T}\! & \! \tilde{\theta}^{\T}\end{bmatrix}\!\mathcal{M}_{2}(\tau)\!\begin{bmatrix}\tilde{x}\\\tilde{\theta}\end{bmatrix}$, which is a convex expression with respect to each value of $\tau\in [0,T_2]$ and therefore, for each $\tau$, there exists $\alpha(\tau)$ such that $\mathcal{M}_{2}(\tau)=\alpha(\tau)\mathcal{M}_{2}(0)+(1-\alpha(\tau))\mathcal{M}_{2}(T_2)$. By virtue of $\mathcal{M}_{2}(0),\ \mathcal{M}_{2}(T_2)\leq{0}$, we thus obtain $\mathcal{M}_{2}(\tau)\leq {0}$ and consequently $\Omega_{2}(\tilde{x},\tilde{\theta})\leq{0}$, which in turn yields \eqref{eq:20}.

Since $\dot{\Omega}_{1}(\tilde{x},\tilde{\theta}),\ \dot{\Omega}_{2}(\tilde{\theta},\tau)\leq{0}$, we further obtain from Eqns. \eqref{eq:26} and \eqref{eq:27}
\begin{align}
\dot{\mathbf{V}} &=\dot{W}_{1}+\dot{W}_{2}\leq \tilde{x}^{\T}(P_3-P_{4})\tilde{x}+\tilde{\theta}^{\T}(P_5-P_6)\tilde{\theta}\nonumber\\
& \leq \lambda_{\max}(P_3-P_4)\|{\tilde{x}}\|^2+\lambda_{\max}(P_5-P_6)\|\tilde{\theta}\|^2\nonumber\\
& \leq -k_1\|\tilde{x}\|^2-k_2\|\tilde{\theta}\|^2\leq -\chi_3 |\tilde{\mathbf{x}}|^2_{\mathcal{A}}<0,
\end{align}
where $k_1=-\lambda_{\max}(P_3-P_4)$, $k_2=-\lambda_{\max}(P_5-P_6)$, and as a result, the last two conditions of \textit{Property 1} are evident.
\end{proof}

Theorem~\ref{th:2} provides sufficient conditions to guarantee the exponential stability of $\mathcal{H}_{cl}$ with respect to $\mathcal{A}$. However, these conditions in \eqref{eq:26} - \eqref{eq:29} can not be directly used for designing the decision variables $P_1,\ P_2,\ A_c,\ B_c,\ C_c,\ D_c$ and $\delta$. Therefore, some matrix manipulations are required to turn these conditions into an LMI feasibility problem. The proposed control solution to the output regulation problem by the pre-processing architecture is an extension of the results on exponential stabilization of LTI systems in the presence of aperiodic sampling, presented in \cite{Ferrante12}. 

\subsection{LMI based Regulator Design}\label{subsec:4}
In this section, we perform matrix and variable manipulations to turn conditions \eqref{eq:26} - \eqref{eq:29} into a tractable LMI based controller design procedure. First, we find the Schur complement of \eqref{eq:28} as
\begin{equation}\label{eq:33}
\bar{\mathcal{M}}_{1} =\begin{bmatrix}{\He}(P_1\mathbb{A}) & P_1\mathbb{B} & I\\
\mathbb{B}^{\T}P_1 & -P_5 & 0\\
I & 0 & -P_8
\end{bmatrix} \preceq{0}
\end{equation}
where $P_8=P^{-1}_{4}$. Therefore, \eqref{eq:26} now becomes
\begin{equation}\label{eq:34}
P_3 - P^{-1}_{8} \prec 0,
\end{equation}
which is not an LMI with respect to $P_8$. To transform this into an LMI, we need to find an upper bound of $P_3$ in \eqref{eq:34} in terms of $P_8$. From Lemma 1 of \cite{Ferrante12}, for all $\alpha\in\mathbb{R}$, $P^{-1}_{8}$ and $P_8$ are related by the following inequality
\begin{equation} \label{eq:35}
P^{-1}_{8}\succ 2\alpha{I}-\alpha^2P_8.
\end{equation}
Therefore, from \eqref{eq:35}, the matrix conditions in \eqref{eq:34} are met if we assume that the following LMI holds:
\begin{equation} \label{eq:36}
P_3-2\alpha{I}+\alpha^{2}P_8\prec 0,\ \alpha\in\mathbb{R}.
\end{equation}
Next, in \eqref{eq:33}, we observe that the nonlinear terms are associated with the decision variable $P_1$. Let us now characterize the structure of $P_1$ as
\begin{equation}
\begin{aligned} \label{eq:37}
P_1 &=\begin{bmatrix}X & U\\
U^{T} & W
\end{bmatrix},\ P^{-1}_1=\begin{bmatrix}Y & V\\V^{\T} & Z\end{bmatrix},
\end{aligned}
\end{equation}
where $\ X,Y,Z,W\in\mathbb{R}^{n_p+n_z}>0,\ {\det}(U)\neq{0},\ {\det}{V}\neq{0}$ for $U,V\in\mathbb{R}^{n_p+n_z}$. From \eqref{eq:37}, we obtain $W=-V^{-1}YU=-V^{-1}Y(I-XY)V^{-T}$ since $XY+UV^{\T}=I$, and therefore $P_1$ becomes
\begin{equation} \label{eq:38}
P_1=\left[\begin{array}{c|c}X & U \\ \hline \noalign{\vspace{0.7ex}} {U^{T}} & -V^{-1}(Y-YXY)V^{-T}\end{array}\right].
\end{equation}
Next, define a matrix $\Psi=\begin{bmatrix}Y & I\\V^{\T} & 0\end{bmatrix}$ which is nonsingular as ${\det}(V)\neq{0}$. Since $P_1>0$, we also obtain
\begin{equation} \label{eq:39}
\bar{P}_{1}=\Psi^{\T}P_1\Psi=\begin{bmatrix}Y & I\\I & X\end{bmatrix}>0,
\end{equation}
and then by congruence transformation on $\bar{\mathcal{M}}_{1}$, \eqref{eq:33} yields 
\begin{align}
\hat{\mathcal{M}}_{1} &=\text{blkdiag}(\Psi^{\T},I,I)\ \bar{\mathcal{M}}_{1}\ \text{blk diag}(\Psi,I,I)\nonumber\\
&= \begin{bmatrix}\Psi^{\T}{\He}(P_1\mathbb{A})\Psi & \Psi^{\T}P_1\mathbb{B} & \Psi^{\T}\\
\mathbb{B}^{\T}P_1\Psi & -P_5 & 0\\
\Psi & 0 & -P_8
\end{bmatrix}\nonumber\\
&={\He}\Bigg(\begin{bmatrix}\Psi^{\T}P_1\mathbb{A}\Psi & \Psi^{\T}P_1\mathbb{B} & 0\\
0 & -P_5 & 0\\
\Psi & 0 & -P_8\end{bmatrix}\Bigg)
\leq{0}, \label{eq:40con}
\end{align} 
which is still not an LMI. By performing the change of variables as in \cite{Ferrante12} we define $K_1\in\mathbb{R}^{n_v\times(n_p+n_z)}$, $K_2\in\mathbb{R}^{n_v\times{p}}$, $K_3\in\mathbb{R}^{(n_p+n_z)\times(n_p+n_z)}$, $K_4\in\mathbb{R}^{(n_p+n_z)\times{p}}$ such that
\begin{align}
& \begin{bmatrix}K1 & K_2\\K_3-XAY & K_4\end{bmatrix}=\!\!\begin{bmatrix}I & 0\\XB & U\end{bmatrix}\!\!\begin{bmatrix}D_c & C_c\\B_c & A_c\end{bmatrix}\!\!\begin{bmatrix}CY & I\\V^{\T} & 0\end{bmatrix}\nonumber\\
& \Delta_{\mathcal{K}}=\begin{bmatrix}U^{-1} & -U^{-1}XB\\0 & I\end{bmatrix}\Gamma_{\mathcal{K}}\begin{bmatrix}V^{-T} & 0\\-C{Y}V^{-T} & I\end{bmatrix}, \label{eq:41}
\end{align}
where $\Gamma_{\mathcal{K}}=\begin{bmatrix}K_3-XA{Y} & K_4\\K_1 & K_2\end{bmatrix}$. Then, by substituting the results of \eqref{eq:41} in \eqref{eq:40con} we obtain
\begin{align}
\hat{\mathcal{M}}_{1} &={\He}\Bigg(\begin{bmatrix}\Pi_{1} & \Pi_2 & 0\\0 & -P_5 & 0\\\Psi & 0 & -P_8
\end{bmatrix}\Bigg)\leq{0}, \label{eq:42}\\
\Pi_{1} &=\begin{bmatrix}A{Y}+BK_1 & A_0+B{K_2}C\\
K_3 & XA+K_{4}C\end{bmatrix},\Pi_{2}=\begin{bmatrix}BK_2\\K_4\end{bmatrix},\nonumber
\end{align}
which is an LMI with respect to $X, Y, V, K_i,~i=1,2,3,4$. Next, by defining $Z_1=P_2{H}$ and $Z_2=P_2{E}$, it can be easily shown that the equation \eqref{eq:29} can be turned into an LMI with respect to the decision variables $P_2,Z_{1},Z_2,P_6$. The hold gain $\Delta_{\mathcal{J}}$ thus becomes
\begin{equation} \label{eq:42.1}
\Delta_{\mathcal{J}}=\left[\begin{array}{c|c}P^{-1}_{2}Z_1 & P^{-1}_{2}E\end{array}\right].
\end{equation}
In what follows is a proposition with a set of LMIs giving sufficient conditions to solve \textit{Problem 1}. The proof of the results in Proposition \ref{prop.1} appear in \cite{Ferrante12}. 
\begin{myproposition} \label{prop.1}
Given a plant $\mathcal{P}$ in \eqref{eq:plant} with internal model controller $\mathcal{G}$ in \eqref{eq:int_model}, real scalars $\alpha,\delta>0$, $X,Y\in\mathbb{R}^{(n_p+n_z)\times(n_p+n_z)}>0$, $P_3,P_8\in\mathbb{R}^{2(n_p+n_z)\times{2}(n_p+n_z)}>0$, $P_5, P_6\in\mathbb{R}^{p\times{p}}>0$, $V\in\mathbb{R}^{(n_p+n_z)\times{(n_p+n_z)}}$ with ${\det} (V)\neq{0}$, $K_1\in\mathbb{R}^{n_v\times(n_p+n_z)}$, $K_2\in\mathbb{R}^{n_v\times{p}}$, $K_3\in\mathbb{R}^{(n_p+n_z)\times(n_p+n_z)}$, $K_4\in\mathbb{R}^{(n_p+n_z)\times{p}}$, $Z_{1}\in\mathbb{R}^{p\times{p}}, Z_{2}\in\mathbb{R}^{p\times(n_p+n_z)}$ such that
\begin{align}
\bar{P}_{1} &=\begin{bmatrix}Y & I\\I & X\end{bmatrix}>0,\label{eq:43}\\
\hat{\mathcal{M}}_{1} &={\He}\Bigg(\begin{bmatrix}\Pi_{1} & \Pi_2 & 0\\0 & -P_5 & 0\\\Psi & 0 & -P_8
\end{bmatrix}\Bigg)\leq{0}\label{eq:44}\\
P_3 & -2\alpha{I}+\alpha^2{P_8}<0,\label{eq:45}\\
P_5 & -P_6 <0,\label{eq:46}\\
\hat{\mathcal{M}}_{2}(0) &\leq{0},\ \hat{\mathcal{M}}_{2}(T_2)\leq {0},\label{eq:47} 
\end{align}
where $\mathcal{M}_{2}(\tau)=\left[\begin{array}{c|c}e^{\delta\tau}\Lambda_{1}+P_6 & e^{\delta\tau}\Lambda_{2}\\\noalign{\vspace{0.7ex}}\hline\noalign{\vspace{0.7ex}}
e^{\delta\tau}\Lambda^{\T}_{2} &  -P_3\end{array}\right]$, $\Lambda_{1}={\He}(Z_1)-\delta{P_2}$, $\Lambda_{2}=\begin{bmatrix}Z_1C_0-P_2C_0A_0 & Z_2\end{bmatrix}$, $\Psi=\begin{bmatrix}Y & I\\V^{\T} & 0\end{bmatrix}$, 
\begin{align}
\Pi_{1}&=\begin{bmatrix}A{Y}+BK_1 & A_0+B{K_2}C\\
K_3 & XA+K_{4}C\end{bmatrix},~\Pi_{2}=\begin{bmatrix}BK_2\\K_4\end{bmatrix}.\label{eq:34a2}
\end{align} 
Let $U\in\mathbb{R}^{(n_p+n_z)\times(n_p+n_z)}$ be any nonsingular matrix such that
\begin{equation}\label{eq:48}
XY+UV^{\T}=I.
\end{equation}
Then the conditions in Theorem \ref{th:2} and subsequently those of \textit{Property 1} are satisfied. With the selection of control and hold gains as in 
\begin{align}
\Delta_{\mathcal{K}}&=\begin{bmatrix}U^{-1} & \!\!\!-U^{-1}\!X\!B\\0 & I\end{bmatrix}\!\!\begin{bmatrix}K_3\!-\!XA{Y} & \!\!\!K_4\\K_1 & \!\!\!K_2\end{bmatrix}\!\!\begin{bmatrix}V^{-T} & \!\!\!0\\\!-C{Y}V^{-T} & \!\!\!I\end{bmatrix},\nonumber\\ 
\Delta_{\mathcal{J}}&=\left[\begin{array}{c|c}P^{-1}_{2}Z_1 & P^{-1}_{2}E\end{array}\right],\label{eq:con_hold}
\end{align} 
the solution to \textit{Problem 1} is obtained.
\end{myproposition}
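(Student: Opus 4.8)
The plan is to show that feasibility of the LMIs \eqref{eq:43}--\eqref{eq:48} reproduces, after undoing the congruence transformations and the change of variables collected in Section~\ref{subsec:4}, the matrix inequalities \eqref{eq:26}--\eqref{eq:29} of Theorem~\ref{th:2}. Once those hold, \emph{Property 1} follows by Theorem~\ref{th:2}, and Theorem~\ref{th:1} then delivers global exponential stability of $\mathcal{A}$, i.e. the solution of Problem~\ref{prob:Design}. The argument is therefore essentially the chain of equivalences of Section~\ref{subsec:4} read in reverse, so most of the work is bookkeeping; the only genuinely new point concerns the invertibility of the change of variables.

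First I would recover positivity of $P_1$. Since $\Psi$ is nonsingular (because $\det V\neq 0$) and $\bar P_1=\Psi^{\T}P_1\Psi$, condition $\bar P_1>0$ in \eqref{eq:43} is equivalent to $P_1>0$, supplying the decision variable required by Theorem~\ref{th:2}. Next I would undo the congruence of \eqref{eq:40con}: applying $\mathrm{blkdiag}(\Psi^{-1},I,I)$ on both sides turns $\hat{\mathcal M}_1\leq 0$ from \eqref{eq:44} back into $\bar{\mathcal M}_1\leq 0$ of \eqref{eq:33}, and a Schur complement with respect to the block $-P_8$ (with $P_8=P_4^{-1}\succ 0$) restores \eqref{eq:28}. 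The substitution \eqref{eq:41} is what linearises \eqref{eq:40con} into \eqref{eq:42}; its inverse is exactly the gain formula \eqref{eq:con_hold}, so given any feasible solution of \eqref{eq:44} and defining $\Delta_{\mathcal K}$ by \eqref{eq:con_hold}, the blocks of $\Pi_1,\Pi_2$ coincide with the products in \eqref{eq:41} and \eqref{eq:44} becomes literally \eqref{eq:40con}, equivalently \eqref{eq:28}.

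For the remaining inequalities I would invoke the bound \eqref{eq:35} of Lemma~1 of \cite{Ferrante12}: since \eqref{eq:45} reads $P_3-2\alpha I+\alpha^2P_8\prec 0$, it implies $P_3-P_8^{-1}\prec 0$, which is \eqref{eq:34} and hence \eqref{eq:26} after setting $P_4=P_8^{-1}$. Condition \eqref{eq:46} is \eqref{eq:27} verbatim. For the holding-device block I would introduce $Z_1=P_2H$, $Z_2=P_2E$, under which every entry of $\mathcal M_2(\tau)$ becomes affine in the decision variables; because $\mathcal M_2(\tau)$ is a convex combination of $\mathcal M_2(0)$ and $\mathcal M_2(T_2)$ (as already exploited in Theorem~\ref{th:2}), the two endpoint LMIs \eqref{eq:47} yield $\mathcal M_2(\tau)\leq 0$ for all $\tau\in[0,T_2]$, i.e. \eqref{eq:29}, and inverting the substitution gives the hold gain $\Delta_{\mathcal J}=[\,P_2^{-1}Z_1\mid P_2^{-1}Z_2\,]$ in \eqref{eq:con_hold}.

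The step I expect to be the main obstacle is justifying that \eqref{eq:41} is a genuine bijection, that is, that a nonsingular $U$ solving \eqref{eq:48} exists so that \eqref{eq:con_hold} is well defined. This requires $I-XY$ to be invertible, which I would derive from $\bar P_1>0$: the Schur complement of \eqref{eq:43} gives $Y-X^{-1}>0$, hence $X^{1/2}YX^{1/2}\succ I$, so every eigenvalue of $XY$ (similar to $X^{1/2}YX^{1/2}$) exceeds $1$; thus $1\notin\sigma(XY)$, $I-XY$ is nonsingular, and $U=(I-XY)V^{-\T}$ is well defined and invertible. With $U$ in hand the maps \eqref{eq:41} and \eqref{eq:con_hold} are mutually inverse, and collecting the four verified conditions of Theorem~\ref{th:2} closes the argument.
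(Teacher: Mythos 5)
Your proposal is correct and follows essentially the same route as the paper's proof: undoing the congruence transformation by $\Psi$, the Schur complement with respect to $-P_8$, the bound $P_8^{-1}\succ 2\alpha I-\alpha^2 P_8$ from Lemma~1 of \cite{Ferrante12}, the substitution $Z_1=P_2H$, $Z_2=P_2E$, and the convexity argument in $\tau$ to recover \eqref{eq:26}--\eqref{eq:29} and hence Property~1. The one point where you go beyond the paper is the explicit verification that a nonsingular $U$ solving \eqref{eq:48} exists (via $\bar P_1>0\Rightarrow Y\succ X^{-1}\Rightarrow 1\notin\sigma(XY)$, so $U=(I-XY)V^{-\T}$ is invertible); the paper simply posits such a $U$, so this is a welcome, correct addition rather than a divergence.
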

\begin{proof}
We have previously stated that the sufficient stability conditions in Theorem \ref{th:2} satisfy the requirements of \textit{Property 1} and subsequently achieve global exponential stability of $\mathcal{H}_{cl}$. However, the conditions in Theorem \ref{th:2} cannot be immediately adopted to pursue the design of stabilizer and hold control matrices $\Delta_{\mathcal{K}}$, $\Delta_{\mathcal{J}}$ and therefore if we can now show that the equations \eqref{eq:43} - \eqref{eq:47} provide an alternative stability conditions in terms of LMI, then \textit{Problem 1} is turned into a feasibility problem of these LMIs and solution to these LMIs eventually lead us to derive $\Delta_{\mathcal{K}}$, $\Delta_{\mathcal{J}}$ of \textit{Problem 1}.

\textit{proof of \eqref{eq:43}}: Since $P_1>0$ with $P_1$ in \eqref{eq:38} is not an LMI with respect to $X,Y$ and $V$, we perform congruence transformation on $P_1>0$ with multiplying $\Psi=\begin{bmatrix}Y & I\\V^{\T} & 0\end{bmatrix}$ and $\Psi^{\T}$ respectively on either side of this inequality to yield $\bar{P}_{1}>0$ in \eqref{eq:43}. The inequality in \eqref{eq:43} is linear with respect to $X,Y$ and ${P}_{1}>0\Longleftrightarrow{\bar{P}_1}>0$.

\textit{Proof of \eqref{eq:44}}: $\mathcal{M}_{1}$ in \eqref{eq:28} is nonlinear with respect to $P_1,A_c,B_c,C_c,D_c$. By using Schur complement on $\mathcal{M}_{1}$, an equivalent simpler inequality in terms of $\bar{\mathcal{M}}_{1}$ is obtained in \eqref{eq:33}. After congruence transformation on \eqref{eq:33} and subsequent change of variables, $\hat{\mathcal{M}_{1}}$ yields \eqref{eq:44}, which is now linear with respect to $X,Y,V,K_1,K_2,K_3,K_4,P_5,P_8$.

\textit{Proof of \eqref{eq:45}}: As we have noted earlier, $\hat{\mathcal{M}}_{1}$ is linear with respect to $P_8$. However, with the substitution of $P_4=P^{-1}_{8}$, \eqref{eq:26} yields \eqref{eq:34} which is not an LMI with respect to $P_8$. From Lemma 1 of \cite{Ferrante12}, we have shown earlier that the LMI in \eqref{eq:45} indeed satisfies \eqref{eq:35} for any $\alpha\in\mathbb{R}$.

Notice that the LMI conditions in \eqref{eq:27} and \eqref{eq:46} are unchanged. Furthermore, the proof of \eqref{eq:47} directly follows from \eqref{eq:29} by setting $Z_1=P_2{H}$ and $Z_{2}=P_2{E}$. Therefore, the set of LMIs in \eqref{eq:43} - \eqref{eq:48} are equivalent sufficient stability conditions for asserting global exponential stability of $\mathcal{H}_{cl}$. Furthermore, once we solve these LMIs, then with the decision variables in \textit{Proposition 1}, we construct our stabilizer and hold matrix $\Delta_{\mathcal{K}}$ and $\Delta_{\mathcal{J}}$ respectively in \eqref{eq:41} and \eqref{eq:42.1}. This concludes the proof.
\end{proof}
\subsection{Illustrative Example:} \label{sub:sec} In this section, we present a numerical example to illustrate the effectiveness of our designed stabilizer for the following plant:
\begin{equation}
\begin{aligned}
A_p & =\begin{bmatrix}-2 &1\\0 & -0.8\end{bmatrix},B_p=\begin{bmatrix}0\\1\end{bmatrix},E_p\!=\!\begin{bmatrix}1 &0\end{bmatrix},\\
C_p & = 10E_p,~F_p\!=20B^{\T}_{p}, \label{eq:ex1}
\end{aligned}
\end{equation}
and exosystem \eqref{eq:2} is a single frequency harmonic oscillator of the form
\begin{equation}\label{eq:ex2}
S = \begin{bmatrix}0 & 1\\-1 & 0\end{bmatrix},~\sigma(S)\ =\ \pm\ 1i.
\end{equation}
It is easy to verify that the Assumptions \ref{as:1}, \ref{as:2} are satisfied. Then, according to \cite{14}, we select $1$- copy internal model of the exosystem \eqref{eq:int_model} as
\begin{equation}\label{eq:eq36}
G_1=\begin{bmatrix}0 & 1\\-1 & 0\end{bmatrix}, G_2=\begin{bmatrix}0\\1\end{bmatrix}. 
\end{equation} 
With the above system parameters of the extended plant model $\hat{\mathcal{P}}$ in Fig. \ref{fig.1}, we then proceed to obtain a numerical solution to the set of LMIs in Proposition~\ref{prop.1}. Numerical solutions to LMIs are obtained through a YALMIP toolbox \cite{Lofberg7} in Matlab\textsuperscript{\textcopyright} and SDPT3 solver \cite{tutuncu6}. To enforce the nonsingularity of matrix $V$ in \textit{Proposition} \ref{prop.1} and avoid ill-conditioned controller matrix $\Delta_{\mathcal{K}}$ \eqref{eq:con_hold}, following \cite{Ferrante12}, we respectively consider two additional constraints
\begin{align}
V + V^{\T} \succ 0,~ -50\bar{P}_{1} \preceq  {\Pi_1+\Pi^{\T}_{1}} \preceq -0.2\bar{P}_{1}, \label{eq:53}
\end{align}
where matrices $\bar{P}_{1}$ and $\Pi_{1}$ are defined in \eqref{eq:43} and \eqref{eq:34a2}. With our proposed design methodology, we find a feasible solution to the LMIs in \textit{Proposition} \ref{prop.1} for $T_2 = 0.3$, $\alpha = 4.35$, $\delta=3.5$ and the stabilizer and hold gain matrices are given as follows
\begin{align}
A_c & \!=\!\! \begin{bmatrix}\!-2.05 & \!0.78 & \!-0.084 & \!-0.063\\\!-4.76 & \!-3.27 & \!-0.417 & \!-0.055\\
\!-7.799 & \!-6.4 & \!-4.47 & \!1.259\\
\!40.98 & \!19.73 & \!-2.48 & \!-20.32
\end{bmatrix}\!,B_c\!=\!\begin{bmatrix}0.003\\0.13\\0.2\\-1.08\end{bmatrix}\!,\nonumber\\
C_c & = \begin{bmatrix}184.05 & \!\!103.39 & \!\!11.27 & \!\!-78.5\end{bmatrix}\!,D_c\!=\!-4.827,\nonumber\\
H & =-2.135, E=\!\begin{bmatrix}-2.13 & \!\!\!0.0014 & \!\!\!0.0171 & \!\!\!-0.0003\end{bmatrix}\!\!. \nonumber
\end{align}
With these controller parameters, and minimum dwell time $T_1 = 0.1$, we observe from Figs.~\ref{fig.4} that the plant output successfully tracks the reference exosystem trajectory $F_pw$ and the regulated output goes to zero. 
 \begin{figure}
  \centering
  \includegraphics[width=1\linewidth]{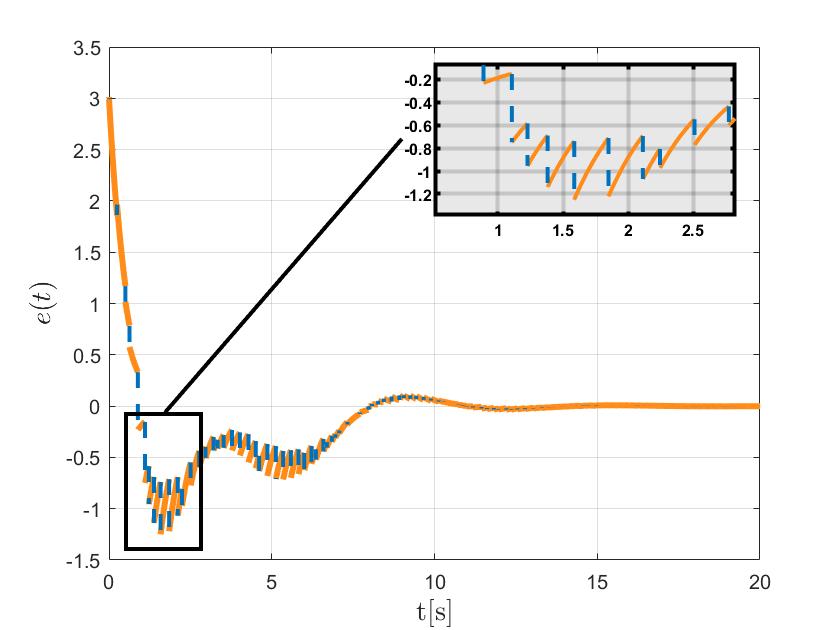} 
\caption{Sample and hold regulated output $e(t)$ with dashed lines indicating jumps in it.}\vspace{-2.5ex}
\label{fig.4}
\end{figure}

\section{Postprocessing with hybrid internal model}\label{sec:7}
\begin{figure}
\centering
\includegraphics[width=0.48\textwidth]{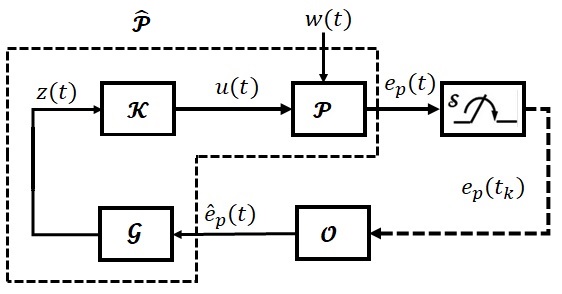}
\caption{{\label{fig.1a}} Schematic representation of the closed loop system with sporadic measurements under ``post-processing'' architecture. Continuous-time signals are marked with solid arrows, while the sporadic measurements are with dashed arrows.}\vspace{-1ex}
\end{figure}
In this section, we illustrate how the postprocessing paradigm can be exploited in the setting  of this paper to solve the output regulation problem. In particular, one interesting aspect that emerges is that in postprocessing internal model control, the internal model is continuously fed via the regulation error, which however in our setting is available only intermittently. To overcome this problem, we propose to augment a classical internal-model $\mathcal{G}$ with a hybrid estimator $\mathcal{O}$, as shown in Fig. \ref{fig.1a}, which provides a converging estimate of the regulation error based on the intermittent measurement. This gives rise to a ``hybrid internal model''. With a little abuse of notation, we consider the following architecture for the internal model: 
\begin{equation}\label{eq:int_model2}
\mathcal{G}\left\{
\begin{aligned}
&\dot{z}=G_{1}z+G_{2}\hat{e}_{p},\hspace{18ex}\text{if}~t\neq{t_k},\\
&z(t^{+})= z(t),\hspace{22ex}\text{if}~t={t_k},\\
&u_{i}= Kz,
\end{aligned}\right.
\end{equation} 
where $z\in\mathbb{R}^{n_z}$ is the state vector of the internal model $\mathcal{G}$, and the input to the internal model $\hat{e}_{p}$ is an estimated output of $e_{p}$ by the hybrid observer $\mathcal{O}$
\begin{equation}\label{eq:h.o}
\mathcal{O}\left\{
\begin{aligned}
&\dot{\chi} = \mathsf{T}\chi,\hspace{26ex}\text{if}~t\neq{t_k}\\
&\chi(t^+) =\mathsf{L_1}\chi(t)+\mathsf{L_2}e_p(t),\hspace{9.5ex}\text{if}~t={t_k},\\
&\hat{e}_{p} = \mathsf{H}\chi,
\end{aligned}\right.
\end{equation}
with the observer matrices $\mathsf{L}_i,~i=1,2$, and $\mathsf{T}$ to be designed later and $\chi\in\R^{n_\chi}$ being the state variable of the hybrid observer $\mathcal{O}$. The output $u_i$ in \eqref{eq:int_model2} of the above internal model is fed to a standard LTI continuous-time stabilizer $\mathcal{K}$ with state vector $x_k\in\mathbb{R}^{n_k}$ as
\begin{equation}\label{eq:40}
\mathcal{K}\left\{\begin{aligned}
&\dot{x}_k = A_kx_k+B_k{u_i},\hspace{12ex}\text{if}~t\neq{t_k},\\
&x_k(t^+) = x_k(t),\hspace{16ex}\text{if}~t={t_k},\\
& u(t) = C_k x_k(t)+D_ku_{i}(t).
\end{aligned}\right.
\end{equation}
Before moving forward, we suppose the following assumption holds.
\begin{myassumptions}
\label{assu:francis}
Let 
$$
A_{cl}=\!\!\begin{bmatrix}
A_p&\!\!\!B_pC_k\\
0&\!\!\!A_k
\end{bmatrix}, B_{cl}=\!\!\begin{bmatrix}
B_pD_k\\
B_k
\end{bmatrix}\!\!K, \mathsf{H}_{1}= \begin{bmatrix}C_p&\!\!\!\!0_{p\times{n_k}}\end{bmatrix},
$$
and for all $\lambda\in\sigma(S)$, the matrix
$$\begin{bmatrix}A_{cl}-\lambda I & B_{cl}\\\mathsf{H}_{1} & 0\end{bmatrix}
$$
is full rank.
\end{myassumptions}

Assumption~\ref{assu:francis} ensures the existence of a pair $(X_M, Z)$ such that: 
\begin{equation}\label{eq:Francis}
\begin{aligned}
X_MS &=A_{cl}X_M+B_{cl}{Z}+E_{cl},\\
0 &=\mathsf{H}_{1}X_M-F_p,
\end{aligned}
\end{equation} 
where $E_{cl}^{\T}\coloneqq \begin{bmatrix}
E_p^{\T} &0
\end{bmatrix}$, $Z\in\mathbb{R}^{q\times{n_z}}$ satisfying $ZS=GZ$ by virtue of internal model principle. We show later that Assumption~\ref{assu:francis} is fulfilled under some natural conditions. Let us define an augmented stabilizer-plant system with the state vector $x_{M}=\text{col}(x_p,x_k)$ as   
\begin{equation}
\begin{aligned}
& \dot{x}_{M}= A_{cl}x_{M}+B_{cl}z+E_{cl}w,\hspace{6ex}\text{if}~t\neq{t_k},\\
& e_{p}= \mathsf{H}_{1}x_{M}-F_{p}w,\\
& x_{M}(t^{+})= x_{M},\hspace{19.8ex}\text{if}~t={t_k},
\end{aligned}
\end{equation}
where internal model state $z$ evolves according to \eqref{eq:int_model2} and is rewritten as
\begin{equation}
\begin{aligned}
&\dot{z}=G_{1}z+G_{2}\mathsf{H}_{1}\chi,\hspace{16ex}\text{if}~t\neq{t_k},\\
&z(t^{+})=z(t),\hspace{21.7ex}\text{if}~t={t_k}.
\end{aligned}
\end{equation}
By using the change of coordinates $\tilde{x}_{M}=x_{M}-X_{M}w, \tilde{z}=z-Zw$ and defining $\tilde{x}_{\alpha}=\text{col}(\tilde{x}_{M},\tilde{z})$ with $X_{M}$ and $Z$ being solution to \eqref{eq:Francis} and \eqref{eq:Z} respectively, the augmented system $\hat{\mathcal{P}}$ in Figure \ref{fig.1a} consisting of the stabilizer-plant and internal model dynamics in transformed coordinates yield the following form
\begin{equation}\label{eq:eq40}
\begin{aligned}
& \dot{\tilde{x}}_{\alpha} = \mathfrak{A}_{c}\tilde{x}_{\alpha}+\mathfrak{B}_{c}(\mathsf{H}\chi-\begin{bmatrix}\mathsf{H}_{1}&0\end{bmatrix}\tilde{x}_{\alpha}),\hspace{2.4ex}\text{if}~t\neq{t_k},\\
& \tilde{x}_{\alpha}(t^{+}) = \tilde{x}_{\alpha},\hspace{24.8ex}\text{if}~t={t_k}, 
\end{aligned}
\end{equation}
where 
\begin{equation}
\begin{aligned}
\mathfrak{A}_{c}= \mathfrak{A}\!+\!\mathfrak{B}_{c}\begin{bmatrix}\mathsf{H}_{1}&0\end{bmatrix}, \mathfrak{A}=\begin{bmatrix}A_{cl} & B_{cl}\\0 & G_{1}\end{bmatrix}, \mathfrak{B}_{c} = \begin{bmatrix}0\\G_{2}\end{bmatrix}.\nonumber
\end{aligned}
\end{equation}
At this stage it is clear that to solve the output regulation problem, $\mathsf{T}$ and $\mathsf{L}_{i},~i=1,2$ in \eqref{eq:h.o} need to be designed to ensure that $\tilde{e}_p\coloneqq (e_p-\hat{e}_p)$ approaches zero asymptotically. To this end, we pick $n_\chi=n_p+n_k+n_z+p$, we denote 
$\chi=(\chi_1, \chi_2)\in\R^{n_p+n_k+n_z}\times \R^{p}$. The state component $\chi_{1}$ can be viewed as an estimate of $\tilde{x}_{\alpha}$ in \eqref{eq:eq40} and $\chi_{2}$ of $\tilde{e}_{p}$. Based on this, let us now select the observer parameters as
\begin{equation}\label{eq:eq45}
\begin{aligned}
\mathsf{T} & = \begin{bmatrix}\mathfrak{A}_{c} & Q\\0 & W\end{bmatrix}, \mathsf{L}_{1}=\begin{bmatrix}I & 0\\-\mathsf{H}_{2} & 0\end{bmatrix}, \mathsf{L}_{2}=\begin{bmatrix}0\\I\end{bmatrix},\\
\mathsf{H}_{2} &=\begin{bmatrix}\mathsf{H}_{1} & 0\end{bmatrix}, \mathsf{H}=\begin{bmatrix}\mathsf{H}_{2}&0\end{bmatrix},
\end{aligned}
\end{equation}
where $Q\in\R^{(n_p+
n_k+n_z)\times p}$ and $W\in\R^{p\times p}$ are to be designed. By introducing a clock variable $\tau$, defining $\tilde{\chi}_{1}=\tilde{x}_{\alpha}-\chi_{1}$, $\tilde{\chi}_{2}=\chi_{2}-\mathsf{H}_{2}\tilde{\chi}_{1}$, the overall closed-loop hybrid dynamical system with state vector $\tilde{\mathbf{x}}=\text{col}(\tilde{x}_{c},\tau), \tilde{x}_{c}=\text{col}(\chi_{1},\tilde{\chi}_{1},\tilde{\chi}_{2})$ yields \eqref{eq:14}, where
\begin{equation}\label{eq:eq42}
\begin{aligned}
f(\tilde{\mathbf{x}}) & = \text{blk diag}\left(\begin{bmatrix}\mathfrak{A}_{c} & \begin{array}{cc}Q\mathsf{H}_{2}& Q\end{array}\\0 & h(\tilde{\chi}_{1},\tilde{\chi}_{2})\end{bmatrix},-1\right),\\
h(\tilde{\chi}_{1},\tilde{\chi}_{2}) & = \begin{bmatrix}\mathfrak{A}-Q\mathsf{H}_{2} & -Q\\W\mathsf{H}_{2}-\mathsf{H}_{2}(\mathfrak{A}-Q\mathsf{H}_{2}) & W+\mathsf{H}_{2}Q\end{bmatrix},\\
\mathcal{C} & = \mathbb{R}^{2(n_p+n_k+n_z)+p}\times\left[0,T_2\right],\\
G(\tilde{\mathbf{x}}^{+}) & =\text{blk diag}(I,0,\tau^{+}), \tau^{+}\in\left[T_1,T_2\right],\\
\mathcal{D} & = \mathbb{R}^{2(n_p+n_k+n_z)+p}\times\{0\}.
\end{aligned}
\end{equation}
Let us now consider the following assumption.
\begin{myassumptions}
\label{as:LyapIM}
The matrix $\mathfrak{A}_c$ is Hurwitz  and there exist $\mathcal{W}\colon\R^{n_\chi+1}\rightarrow\R$ and positive real numbers $\boldsymbol{\omega}_1$, $\boldsymbol{\omega}_2$, $\boldsymbol{\varrho}$ such that for all $(\tilde{\chi}_1, \tilde{\chi}_2,\tau)\in\R^{n_\chi}\times [0, T_2]$:
$$
\begin{aligned}
&\boldsymbol{\omega}_1(\tilde{\chi}_1^{\T} \tilde{\chi}_1+\tilde{\chi}_2^{\T} \tilde{\chi}_2)\leq \mathcal{W}(\tilde{\chi}_1, \tilde{\chi}_2,\tau)\leq\boldsymbol{\omega}_2(\tilde{\chi}_1^{\T} \tilde{\chi}_1+\tilde{\chi}_2^\top \tilde{\chi}_2)\\
&\langle\nabla{\mathcal{W}}(\tilde{\chi}_1, \tilde{\chi}_2,\tau), (h(\tilde{\chi}_1, \tilde{\chi}_2),-1) \rangle\leq -\boldsymbol{\varrho} (\tilde{\chi}_1^\top \tilde{\chi}_1+\tilde{\chi}_2^\top \tilde{\chi}_2)\\
&\mathcal{W}(\tilde{\chi}_1, 0,\nu)-\mathcal{W}(\tilde{\chi}_1, \tilde{\chi}_2,0)\leq 0\qquad \forall\nu\in[T_1, T_2]
\end{aligned}
$$
\end{myassumptions}

The result given next shows that under Assumption~\ref{as:LyapIM} the output regulation problem is solved.
\begin{mytheorem}
\label{the:OutputObserver}
Let Assumption~\ref{as:LyapIM} hold. Then, the following items hold:
\begin{itemize}
    \item[($i$)] Assumption~\ref{assu:francis} holds;
    \item[($ii$)] the set
\begin{equation}
\label{eq:setIM}
\mathcal{A}_I\coloneqq \{0\}\times [0, T_2]\subset \R^{2(n_p+n_z+n_k)+p}\times\R
\end{equation}
is globally exponentially stable for \eqref{eq:14} with flow/jump maps and their respective domains given in \eqref{eq:eq42}.
\end{itemize}
\end{mytheorem}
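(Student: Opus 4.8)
The plan is to establish the two items in sequence: item ($i$) guarantees that the steady-state and error coordinates used to derive the flow and jump maps in \eqref{eq:eq42} are well posed, and item ($ii$) then follows from a cascade Lyapunov analysis patterned on the proof of Theorem~\ref{th:1}. Throughout I would exploit that, by Assumption~\ref{as:LyapIM}, $\mathfrak{A}_c$ is Hurwitz while $S$ is neutrally stable, so $\sigma(\mathfrak{A}_c)\cap\sigma(S)=\emptyset$.

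For item ($i$) I would argue by contradiction, using the $p$-copy structure of the internal model. Recall $\mathfrak{A}_c=\begin{bmatrix}A_{cl} & B_{cl}\\ G_2\mathsf{H}_1 & G_1\end{bmatrix}$. Suppose that for some $\lambda\in\sigma(S)=\sigma(G_1)$ the pencil $M(\lambda)=\begin{bmatrix}A_{cl}-\lambda I & B_{cl}\\ \mathsf{H}_1 & 0\end{bmatrix}$ fails to have full row rank; then there is a nonzero row vector $[\mu^{\T}\ \nu^{\T}]$ with $\mu^{\T}(A_{cl}-\lambda I)+\nu^{\T}\mathsf{H}_1=0$ and $\mu^{\T}B_{cl}=0$. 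I would then build a left eigenvector $[\mu^{\T}\ \zeta^{\T}]$ of $\mathfrak{A}_c$ at $\lambda$: the second block-row relation $\mu^{\T}B_{cl}+\zeta^{\T}(G_1-\lambda I)=0$ reduces, via $\mu^{\T}B_{cl}=0$, to $\zeta^{\T}G_1=\lambda\zeta^{\T}$, forcing $\zeta$ to be a left eigenvector of $G_1$, while the first block-row relation reduces to $(\zeta^{\T}G_2-\nu^{\T})\mathsf{H}_1=0$. The crux is that the block-diagonal construction \eqref{eq:6} makes the $\lambda$-eigenspace of $G_1$ exactly $p$-dimensional, and controllability of each pair $(\beta_i,\gamma_i)$ guarantees, via PBH, that the images $\zeta^{\T}G_2$ sweep out all of $\mathbb{R}^{1\times p}$ as $\zeta$ ranges over this eigenspace. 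Hence I can select $\zeta$ with $\zeta^{\T}G_2=\nu^{\T}$ exactly, which yields $(\zeta^{\T}G_2-\nu^{\T})\mathsf{H}_1=0$ and makes $[\mu^{\T}\ \zeta^{\T}]$ nonzero (even in the degenerate case $\mu=0$, where $\nu\neq0$ forces $\zeta^{\T}G_2=\nu^{\T}\neq0$). This places $\lambda\in\sigma(\mathfrak{A}_c)$, contradicting the Hurwitz property; therefore $M(\lambda)$ is full row rank for every $\lambda\in\sigma(S)$, i.e.\ Assumption~\ref{assu:francis} holds, which is equivalent to solvability of \eqref{eq:Francis}.

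For item ($ii$) I would exploit the block-triangular structure of $f$ in \eqref{eq:eq42}: the observer-error pair $(\tilde\chi_1,\tilde\chi_2)$ evolves autonomously under $h$ and through the reset $\tilde\chi_2^{+}=0$, while $\chi_1$ is a Hurwitz flow $\dot\chi_1=\mathfrak{A}_c\chi_1+Q\mathsf{H}_2\tilde\chi_1+Q\tilde\chi_2$ that does not jump. Let $P=P^{\T}\succ0$ solve $\He(P\mathfrak{A}_c)=-I$, which exists since $\mathfrak{A}_c$ is Hurwitz, and set $\mathbf{V}(\tilde{\mathbf{x}})=\chi_1^{\T}P\chi_1+\kappa\,\mathcal{W}(\tilde\chi_1,\tilde\chi_2,\tau)$ with $\kappa>0$ to be fixed. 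The quadratic bounds on $\mathcal{W}$ in Assumption~\ref{as:LyapIM}, together with $|\tilde{\mathbf{x}}|_{\mathcal{A}_I}^2=\|\chi_1\|^2+\|\tilde\chi_1\|^2+\|\tilde\chi_2\|^2$, give a sandwich $c_1|\tilde{\mathbf{x}}|_{\mathcal{A}_I}^2\le\mathbf{V}\le c_2|\tilde{\mathbf{x}}|_{\mathcal{A}_I}^2$. Along flows, $\langle\nabla\mathbf{V},f\rangle\le-\|\chi_1\|^2+2\chi_1^{\T}PQ(\mathsf{H}_2\tilde\chi_1+\tilde\chi_2)-\kappa\boldsymbol{\varrho}(\|\tilde\chi_1\|^2+\|\tilde\chi_2\|^2)$; Young's inequality absorbs the cross term into $\tfrac12\|\chi_1\|^2$ plus a multiple of $\|\tilde\chi_1\|^2+\|\tilde\chi_2\|^2$, and taking $\kappa$ large enough gives $\langle\nabla\mathbf{V},f\rangle\le-c_3|\tilde{\mathbf{x}}|_{\mathcal{A}_I}^2$. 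At jumps, since $\chi_1^{+}=\chi_1$ and $\tilde\chi_1^{+}=\tilde\chi_1$, we get $\mathbf{V}(G(\tilde{\mathbf{x}}))-\mathbf{V}(\tilde{\mathbf{x}})=\kappa[\mathcal{W}(\tilde\chi_1,0,\nu)-\mathcal{W}(\tilde\chi_1,\tilde\chi_2,0)]\le0$ by the third inequality of Assumption~\ref{as:LyapIM}. Thus $\dot{\mathbf{V}}\le-(c_3/c_2)\mathbf{V}$ on $\mathcal{C}$ and $\mathbf{V}$ does not increase on $\mathcal{D}$; as in the proof of Theorem~\ref{th:1}, completeness of maximal solutions follows from the linear flow and the reset $\tau^{+}\in[T_1,T_2]$, and the uniform dwell time $T_1$ converts the $t$-decay of $\mathbf{V}$ into the $(t+j)$-decay required by \eqref{eq:16}, establishing GES of $\mathcal{A}_I$.

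I expect the main obstacle to be item ($i$): whereas item ($ii$) is a fairly mechanical adaptation of the cascade and dwell-time machinery already used for Theorem~\ref{th:1}, item ($i$) genuinely relies on the fine structure of the internal model, namely the block-diagonal $p$-copy form of $(G_1,G_2)$ and the controllability of the pairs $(\beta_i,\gamma_i)$, to solve the constraint $\zeta^{\T}G_2=\nu^{\T}$ within the $\lambda$-eigenspace of $G_1$. The delicate points are verifying that neutral stability of $S$ makes the relevant eigenvectors span, so that the argument covers every exosystem mode, and confirming that the constructed left eigenvector $[\mu^{\T}\ \zeta^{\T}]$ is nonzero in the degenerate case $\mu=0$.
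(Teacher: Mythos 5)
Your proposal is correct, and the two items split differently against the paper's own argument. For item ($ii$) you follow essentially the paper's route: the paper also uses the cascade Lyapunov function $\mathcal{U}(\boldsymbol{\tilde{x}})=\chi_1^{\T}\boldsymbol{\mathcal{P}}\chi_1+\mathcal{W}(\tilde\chi_1,\tilde\chi_2,\tau)$ with $\He(\boldsymbol{\mathcal{P}}\mathfrak{A}_c)=-\boldsymbol{\mathcal{Q}}$, absorbs the cross terms $2\chi_1^{\T}\boldsymbol{\mathcal{P}}Q\mathsf{H}_{2}\tilde\chi_1+2\chi_1^{\T}\boldsymbol{\mathcal{P}}Q\tilde\chi_2$ by Young's inequality, verifies non-increase at jumps from the third inequality of Assumption~\ref{as:LyapIM}, and converts the $t$-decay into the $(t+j)$-decay of \eqref{eq:16} via the dwell time $T_1$; your normalization $\boldsymbol{\mathcal{Q}}=I$ with a large weight $\kappa$ on $\mathcal{W}$ is equivalent to the paper's tuning of the Young parameters $\varpi_0,\varpi_1$ (equivalently, rescaling $\boldsymbol{\mathcal{P}}$). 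For item ($i$) you take a genuinely different route. The paper deduces detectability of $(\mathfrak{A},\mathsf{H}_2)$ from $\mathfrak{A}_c$ being Hurwitz, writes the PBH condition, and permutes rows to conclude; that PBH condition only yields full \emph{column} rank of the stacked matrix, whereas solvability of \eqref{eq:Francis} requires full \emph{row} rank of $\begin{bmatrix}A_{cl}-\lambda I& B_{cl}\\ \mathsf{H}_1&0\end{bmatrix}$, so the paper's step is terse and does not obviously close. Your contradiction argument---manufacturing a left eigenvector $[\mu^{\T}\ \zeta^{\T}]$ of $\mathfrak{A}_c$ at $\lambda\in\sigma(S)$ from a hypothetical row-rank defect, and using the block-diagonal $p$-copy structure together with single-input controllability of each $(\beta_i,\gamma_i)$ (which makes each left $\lambda$-eigenspace of $\beta_i$ one-dimensional with $\zeta_i^{\T}\gamma_i\neq 0$) to solve $\zeta^{\T}G_2=\nu^{\T}$---is the classical non-resonance lemma and directly targets the row-rank statement actually needed; it buys a complete argument where the paper gestures at one. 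Two small caveats: your construction needs $\lambda\in\sigma(\beta_i)$ for \emph{every} $i$ (each block reproducing the minimal polynomial of $S$), which is the standard $p$-copy convention but only implicit in \eqref{eq:6} where merely $\sigma(G_1)=\sigma(S)$ is stated; and the sweep of $\zeta^{\T}G_2$ should be over $\mathbb{C}^{1\times p}$ rather than $\mathbb{R}^{1\times p}$, since $\lambda$ is generally complex.
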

\begin{proof}
To prove item $(i)$, it suffices to notice that if 
$$\mathfrak{A}_c=\mathfrak{A}+\mathfrak{B}_{c}\mathsf{H}_{2}$$
is Hurwitz, then the triple
$\left(\mathfrak{A},\mathfrak{B}_{c},\mathsf{H}_{2}\right)$
is stabilizable and detectable. Therefore, one has that for all $\lambda\in\mathbb{C}_+$, the matrix 
$$
\begin{aligned}
\rk&\begin{bmatrix}
A_{cl}-\lambda I&B_{cl}&0\\
0&G_1-\lambda I&G_2
\end{bmatrix}=n_p+n_k+n_z\\
\rk&\begin{bmatrix}
A_{cl}-\lambda I&B_{cl}\\
0&G_1-\lambda I\\
\mathsf{H}_{1}&0
\end{bmatrix}=n_p+n_k+n_z
\end{aligned}
$$
From the second condition above, it turns out that for all $\lambda\in\sigma(G_1)=\sigma(S)\subset\mathbb{C}_+$
$$
\rk\begin{bmatrix}
A_{cl}-\lambda I&B_{cl}\\
\mathsf{H}_{1}&0\\
0&G_1-\lambda I
\end{bmatrix}=n_p+n_k+n_z
$$
which is equivalent to Assumption~\ref{assu:francis}.

To show item $(ii)$, we rely on the cascade structure of \eqref{eq:14} with flow/jump maps and domain sets given in \eqref{eq:eq42}. Define for all $\boldsymbol{\tilde{x}}\in\mathcal{C}$
$$
\mathcal{U}(\boldsymbol{\tilde{x}})\coloneqq \chi_1^{\T}\boldsymbol{\mathcal{P}}\chi_1+ \mathcal{W}(\tilde{\chi}_1,\tilde{\chi}_2,\tau)
$$
where $\boldsymbol{\mathcal{P}}=\boldsymbol{\mathcal{P}}^{\T}> 0$ is such that 
$$
\He(\boldsymbol{\mathcal{P}}\mathfrak{A}_c)=-\boldsymbol{\mathcal{Q}}
$$
for some $\boldsymbol{\mathcal{Q}}=\boldsymbol{\mathcal{Q}}^{\T}> 0$. Such a selection of $\boldsymbol{\mathcal{P}}$ and $\boldsymbol{\mathcal{Q}}$ is possible due to $\mathfrak{A}_c$ being Hurwitz.
Observe that for all $\boldsymbol{\tilde{x}}\in\mathcal{C}$,
\begin{align}
& \boldsymbol{\alpha}_1\vert \boldsymbol{\tilde{x}}\vert^2_{\mathcal{A}_{I}} \leq \mathcal{U}(\mathbf{\tilde{x}})   \leq \boldsymbol{\alpha}_2\vert \mathbf{\tilde{x}}\vert^2_{\mathcal{A}_{I}},\text{with}\label{eq:SandwichW}\\
& \boldsymbol{\alpha}_1=\max\{\lambda_{\max}(\boldsymbol{\mathcal{P}}),\boldsymbol{\omega}_2\}, \boldsymbol{\alpha}_2= \min\{\lambda_{\min}(\boldsymbol{\mathcal{P}}),\boldsymbol{\omega}_1\}.\nonumber
\end{align}

Then, from Assumption~\ref{as:LyapIM},
for all $\boldsymbol{\tilde{x}}\in\mathcal{C}$, one has
$$
\begin{aligned}
&\langle \nabla \mathcal{U}(\boldsymbol{\tilde{x}}), f(\boldsymbol{\tilde{x}})\rangle\leq -\chi_1^{\T}\boldsymbol{\mathcal{Q}}\chi_1-\boldsymbol{\varrho}(\tilde{\chi}_1^\top \tilde{\chi}_1+\tilde{\chi}_2^\top \tilde{\chi}_2)\\
&\hspace{6ex}+2\chi_1^{\T}\boldsymbol{\mathcal{P}}Q\mathsf{H}_{2}\tilde{\chi}_1+2\chi_1^{\T}\boldsymbol{\mathcal{P}}Q\tilde{\chi}_2, 
\end{aligned}
$$
which by using Young inequality yields, for all $\boldsymbol{\tilde{x}}\in\mathcal{C}$:
\begin{align}
&\langle \nabla \mathcal{U}(\boldsymbol{\tilde{x}}), f(\boldsymbol{\tilde{x}})\rangle\!\leq\!\chi_1^\top(-\boldsymbol{\mathcal{Q}}+(\varpi_{0}+\varpi_{1}) I)\chi_1+\!\tilde{\chi_1}^{\T}\!\!\left(-\boldsymbol{\varrho}I\right.\nonumber\\
&\left.+\dfrac{1}{\varpi_{0}}\mathsf{H}^{\T}_{2} Q^{\T}\boldsymbol{\mathcal{P}}^2Q\mathsf{H}_{2}\!\!\right)\!\tilde{\chi}_{1}\!+\!\tilde{\chi}^{\T}_{2}\left(\!\!-\boldsymbol{\varrho}I\!\!+\!\!\dfrac{1}{\varpi_{1}}Q^{\T}\boldsymbol{\mathcal{P}}^{2}Q\!\right)\!\tilde{\chi}_{2},\label{eq:Young}
\end{align}
for any $\varpi_{0},\varpi_{1}>0$. At this stage, select $\varpi_{0}$ and $\varpi_{1}$ such that
\begin{equation}\label{eq:eq46}
\begin{aligned}
&-\boldsymbol{\mathcal{Q}}+(\varpi_{0}+\varpi_{1}) I< 0,\\
&-\boldsymbol{\varrho}I+\dfrac{1}{\varpi_{0}}\mathsf{H}^{\T}_{2}Q^{\T}\boldsymbol{\mathcal{P}}^{2}Q\mathsf{H}_{2}~<~0,\\
&-\boldsymbol{\varrho}I+\dfrac{1}{\varpi_{1}}Q^{\T}\boldsymbol{\mathcal{P}}^{2}Q~<~0.
\end{aligned}
\end{equation}
Then, from \eqref{eq:Young} for all $\boldsymbol{\tilde{x}}\in\mathcal{C}$, $\langle \nabla \mathcal{U}(\boldsymbol{\tilde{x}}), f(\boldsymbol{\tilde{x}})\rangle$ yields
\begin{equation}
\begin{aligned}
&\langle \nabla \mathcal{U}(\boldsymbol{\tilde{x}}), f(\boldsymbol{\tilde{x}})\rangle\leq-\lambda_c\vert \boldsymbol{\tilde{x}}\vert^2_{\mathcal{A}_{I}},
\end{aligned}
\label{eq:WdotFinal}
\end{equation}
where $\lambda_c\coloneqq\min\left\{\vert\lambda_{c_1}\vert,\left\vert\lambda_{c_2}\right\vert,\left\vert\lambda_{c_3}\right\vert
\right\}$ with 
\begin{equation}
\begin{aligned}
\lambda_{c_1}&=\lambda_{\max}\left(-\boldsymbol{\mathcal{Q}}+(\varpi_{0}+\varpi_{1}) I\right),\\
\lambda_{c_2}&=\lambda_{\max}\left(-\boldsymbol{\varrho}I+\dfrac{1}{\varpi_{0}}\mathsf{H}^{\T}_{2}Q^{\T}\boldsymbol{\mathcal{P}}^{2}Q\mathsf{H}_{2}\right),\\ 
\lambda_{c_3}&=\lambda_{\max}\left(-\boldsymbol{\varrho}I+\dfrac{1}{\varpi_{1}}Q^{\T}\boldsymbol{\mathcal{P}}^{2}Q\right).
\end{aligned}
\end{equation} 
For a specific choice of $\varpi_{0}$ and $\varpi_{1}$ satisfying $\dfrac{\varpi_{0}}{\varpi_{1}}\geq\|\mathsf{H}_{2}\|^2$, $\varpi_{0}+\varpi_{1}<\lambda_{\min}(\boldsymbol{\mathcal{Q}})$, it can be easily shown that all the conditions in \eqref{eq:eq46} hold. Now observe that, from Assumption~\ref{as:LyapIM}, for all $\boldsymbol{\tilde{x}}\in\mathcal{D}$, $g\coloneqq (g_{\chi_1}, g_{\tilde{\chi_1}},g_{\tilde{\chi_2}},g_{\tau})\in G(\boldsymbol{\tilde{x}})$ one has 
\begin{equation}
\label{eq:DeltaW}
\mathcal{U}(g)-\mathcal{U}(\boldsymbol{\tilde{x}})=\mathcal{W}(g_{\tilde{\chi_1}},g_{\tilde{\chi_2}},g_{\tau})-\mathcal{W}(\tilde{\chi}_1,\tilde{\chi}_2,0)\leq 0.
\end{equation}
Let $\phi$ be any maximal solution to \eqref{eq:14}. Then, by using \eqref{eq:WdotFinal} and  \eqref{eq:DeltaW}, direct integration of $\mathcal{U}\circ\phi$ yields:
$$
\mathcal{U}(\phi(t, j))\leq e^{-\lambda_c t}\mathcal{U}(\phi(0,0))\quad\forall (t,j)\in\dom\phi
$$
The latter, thanks to Lemma~1 of \cite{Ferrante3} yields, for some positive, solution independent, $\lambda_{0},\kappa$:
$$
\mathcal{U}(\phi(t, j))\leq \kappa e^{-\lambda_{0} (t+j)}\mathcal{U}(\phi(0,0))\quad\forall (t,j)\in\dom\phi
$$
which, thanks to \eqref{eq:SandwichW}, shows that the set $\mathcal{A}_I$ in \eqref{eq:setIM} is GES for \eqref{eq:14}. This concludes the proof.
\end{proof}
\begin{myproposition}
Let $\bar{P}\in\mathbb{R}^{n_p+n_k+n_z}>0$, $\hat{P}\in\mathbb{R}^{p}>0$, $\bar{J}\in\R^{(n_p+n_k+n_z)\times p}$, $\hat{J}\in\R^{p\times p}$, $\delta>0$, such that
\begin{equation}
    \label{eq:LMI_Design_Obser}
    \begin{aligned}
    &\begin{bmatrix}
    \He(\bar{P}\mathfrak{A}-\bar{J}\mathsf{H}_{2}) & -\bar{J}+(-\hat{P}\mathsf{H}_{2}\mathfrak{A}+\hat{J}\mathsf{H}_{2})^{\T}\\
    \bullet&-\delta \hat{P}+\He(\hat{J})
    \end{bmatrix}< 0\\
    &\begin{bmatrix}
   \He(\bar{P}\mathfrak{A}\!-\!\bar{J}\mathsf{H}_{2})&\!\!\!\!-\bar{J}+e^{\delta{T}_{2}}\left[-\hat{P}\mathsf{H}_{2}\mathfrak{A}+\hat{J}\mathsf{H}_{2}\right]^{\T}\\
    \bullet&\!\!\!\!e^{\delta{T}_{2}}\left[-\delta\hat{P}+\He(\hat{J})\right]\!
    \end{bmatrix}\!\!<\!0
    \end{aligned}
\end{equation}
and $\mathfrak{A}_c$ be Hurwitz. Set $Q=\bar{P}^{-1}\bar{J}$, $W=\hat{P}^{-1}\hat{J}-\mathsf{H}_{2}Q$. Then, Assumption~\ref{as:LyapIM} holds with:
\begin{equation}\label{eq:eq51}
\mathcal{W}(\tilde{\chi}_1,\tilde{\chi}_2,\tau)=\tilde{\chi}_1^{\T} \bar{P}\tilde{\chi}_1+e^{\delta\tau} \tilde{\chi}_2^{\T}\hat{P}\tilde{\chi_2}
\end{equation}
\end{myproposition}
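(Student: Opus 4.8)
The plan is to verify, one by one, the three requirements of Assumption~\ref{as:LyapIM} for the candidate Lyapunov function $\mathcal{W}$ in \eqref{eq:eq51}, taking the hypothesis that $\mathfrak{A}_c$ is Hurwitz as the first half of the assumption for free. The quadratic sandwich bounds are immediate: since $\bar{P}\succ 0$, $\hat{P}\succ 0$ and $\tau\in[0,T_2]$ forces $1\le e^{\delta\tau}\le e^{\delta T_2}$, one reads off $\boldsymbol{\omega}_1=\min\{\lambda_{\min}(\bar{P}),\lambda_{\min}(\hat{P})\}$ and $\boldsymbol{\omega}_2=\max\{\lambda_{\max}(\bar{P}),e^{\delta T_2}\lambda_{\max}(\hat{P})\}$.

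The core is the flow inequality. I would compute $\nabla\mathcal{W}=\big(2\bar{P}\tilde{\chi}_1,\,2e^{\delta\tau}\hat{P}\tilde{\chi}_2,\,\delta e^{\delta\tau}\tilde{\chi}_2^{\T}\hat{P}\tilde{\chi}_2\big)$ and pair it with the flow direction $(h(\tilde{\chi}_1,\tilde{\chi}_2),-1)$, using the block form of $h$ in \eqref{eq:eq42}. The decisive simplification is the change of variables $Q=\bar{P}^{-1}\bar{J}$ and $W=\hat{P}^{-1}\hat{J}-\mathsf{H}_2 Q$: the first turns $\bar{P}(\mathfrak{A}-Q\mathsf{H}_2)$ into $\bar{P}\mathfrak{A}-\bar{J}\mathsf{H}_2$ and $\bar{P}Q$ into $\bar{J}$, while the second gives $\hat{P}(W+\mathsf{H}_2 Q)=\hat{J}$ and, after cancellation, $\hat{P}\big(W\mathsf{H}_2-\mathsf{H}_2(\mathfrak{A}-Q\mathsf{H}_2)\big)=\hat{J}\mathsf{H}_2-\hat{P}\mathsf{H}_2\mathfrak{A}$. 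Keeping track of which terms inherit the factor $e^{\delta\tau}$ (exactly those coming from the $\tilde{\chi}_2$-component of $\nabla\mathcal{W}$), the pairing collapses to $\langle\nabla\mathcal{W},(h,-1)\rangle=\text{col}(\tilde{\chi}_1,\tilde{\chi}_2)^{\T}M(\tau)\,\text{col}(\tilde{\chi}_1,\tilde{\chi}_2)$ with
\[
M(\tau)=\begin{bmatrix}\He(\bar{P}\mathfrak{A}-\bar{J}\mathsf{H}_2) & -\bar{J}+e^{\delta\tau}(-\hat{P}\mathsf{H}_2\mathfrak{A}+\hat{J}\mathsf{H}_2)^{\T}\\ \bullet & e^{\delta\tau}\big(-\delta\hat{P}+\He(\hat{J})\big)\end{bmatrix},
\]
whose values at $\tau=0$ and $\tau=T_2$ are exactly the two matrices required to be negative definite in \eqref{eq:LMI_Design_Obser}.

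Mirroring the treatment of $\mathcal{M}_2(\tau)$ in the proof of Theorem~\ref{th:2}, the key observation is that $M(\tau)$ is affine in the scalar $e^{\delta\tau}$, which sweeps the segment $[1,e^{\delta T_2}]$ as $\tau$ ranges over $[0,T_2]$; hence each $M(\tau)$ is a convex combination $\mu M(0)+(1-\mu)M(T_2)$, $\mu\in[0,1]$, and the two LMIs in \eqref{eq:LMI_Design_Obser} force $M(\tau)\prec 0$ uniformly, so the flow inequality holds with $\boldsymbol{\varrho}=\min\{\lambda_{\min}(-M(0)),\lambda_{\min}(-M(T_2))\}>0$. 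The jump inequality is then immediate from the structure of $\mathcal{W}$: the $\bar{P}$-terms cancel and $\mathcal{W}(\tilde{\chi}_1,0,\nu)-\mathcal{W}(\tilde{\chi}_1,\tilde{\chi}_2,0)=-\tilde{\chi}_2^{\T}\hat{P}\tilde{\chi}_2\le 0$ for every $\nu\in[T_1,T_2]$ since $\hat{P}\succ 0$.

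The main obstacle is purely the flow computation: one must expand $\langle\nabla\mathcal{W},(h,-1)\rangle$ with care and confirm that the two substitutions for $Q$ and $W$ reproduce the off-diagonal and bottom-right blocks of $M(\tau)$ exactly, including the placement of every $e^{\delta\tau}$ factor. Once $M(\tau)$ is identified, the remaining pieces --- the sandwich bounds, the affine/convexity argument, and the monotone decrease at jumps --- are routine.
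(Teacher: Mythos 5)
Your proposal is correct and follows essentially the same route as the paper: the same sandwich bounds, the same substitution $Q=\bar{P}^{-1}\bar{J}$, $W=\hat{P}^{-1}\hat{J}-\mathsf{H}_{2}Q$ collapsing $\langle\nabla\mathcal{W},(h,-1)\rangle$ into the quadratic form $\tilde{\chi}^{\T}\mathcal{M}(\tau)\tilde{\chi}$, the same convexity-in-$e^{\delta\tau}$ argument to get $\mathcal{M}(\tau)\prec 0$ from the two endpoint LMIs, and the same cancellation at jumps yielding $-\tilde{\chi}_2^{\T}\hat{P}\tilde{\chi}_2\le 0$. Your explicit choice $\boldsymbol{\varrho}=\min\{\lambda_{\min}(-\mathcal{M}(0)),\lambda_{\min}(-\mathcal{M}(T_2))\}$ is a small addition the paper leaves implicit, but the argument is otherwise the same.
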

\begin{proof}
It is easy to show from \eqref{eq:eq51} that for all $(\tilde{\chi},\tau)\in\R^{n_\chi}\times [0, T_2]$, $\tilde{\chi}=\text{col}(\tilde{\chi}_{1},\tilde{\chi}_{2})$, 
\begin{align}
&\boldsymbol{\omega}_{1}\|\tilde{\chi}\|^{2}\leq\mathcal{W}(\tilde{\chi},\tau)\leq\boldsymbol{\omega}_{2}\|\tilde{\chi}\|^{2}~\text{with}\nonumber\\
&\boldsymbol{\omega}_{1}=\min\left(\lambda_{\min}(\bar{P},\lambda_{\min}(\hat{P})\right),\nonumber\\&\boldsymbol{\omega}_{2}=\max\left(\lambda_{\max}(\bar{P},e^{\delta{T}_{2}}\lambda_{\max}(\hat{P})\right)\nonumber
\end{align}
and thus the first condition in Assumption \ref{as:LyapIM} is satisfied. By differentiating $\mathcal{W}$ in \eqref{eq:eq51} along the flow map $h(\tilde{\chi})$ in \eqref{eq:eq42} and with the substitution of $Q=\bar{P}^{-1}\bar{J}$ and $W=\hat{P}^{-1}\hat{J}-\mathsf{H}_{2}Q$ in the resulting expression, we obtain
\begin{align}
&\dot{\mathcal{W}}(\tilde{\chi},\tau)=\langle\nabla{\mathcal{W}}(\tilde{\chi},\tau), (h(\tilde{\chi}),-1) \rangle=\tilde{\chi}^{\T}\mathcal{M}(\tau)\tilde{\chi},\\
&\mathcal{M}(\tau)\!=\!\!\!\begin{bmatrix}
   \!\He(\bar{P}\mathfrak{A}\!-\!\bar{J}\mathsf{H}_{2})&\!\!-\bar{J}\!\!+\!e^{\delta\tau}\!\!\left(-\mathfrak{A}^{\T}\mathsf{H}^{\T}_{2}\hat{P}\!+\!\mathsf{H}_{2}^{\T}\hat{J}\right)\!\\
    \bullet&\!\!\!\!e^{\delta\tau}\!\!\left(-\delta\hat{P}\!+\!\He(\hat{J})\right)\!\!\!\!
    \end{bmatrix}\!\!.\nonumber
\end{align}
From Lemma 3 of \cite{Ferrante3}, it is straightforward to show that there exists $\lambda^{'}:[0,\tau]\to\left[0,1\right]$ such that for each $\tau\in[0,T_2]$, $\mathcal{M}(\tau)=\lambda^{'}(\tau)\mathcal{M}(0)+(1-\lambda^{'}(\tau))\mathcal{M}(T_2)$. The satisfaction of the two LMIs in \eqref{eq:LMI_Design_Obser} then yields $\mathcal{M}(\tau)<0, \forall{\tau}\in\left[0,T_2\right]$, and thus the second condition in Assumption \ref{as:LyapIM} is satisfied. Furthermore, $\mathcal{W}(\tilde{\chi}_{1},0,v)-\mathcal{W}(\tilde{\chi},0)={-\tilde{\chi}_{2}}\hat{P}\tilde{\chi}_{2}<0$ and the third condition in Assumption \ref{as:LyapIM} is satisfied as well. From Theorem \ref{the:OutputObserver}, the set $\mathcal{A}_{I}$ in \eqref{eq:setIM} of the hybrid closed-loop system \eqref{eq:14} with flow/jump dynamics \eqref{eq:eq42} is therefore globally exponentially stable. This concludes the proof.
\end{proof}

To illustrate the effectiveness of the post-processing paradigm, let us reconsider the same example from Section \ref{sub:sec}. We design a continuous-time stabilizer satisfying \eqref{eq:40} with the stabilizer matrices given as follows
\begin{align}
A_{k} & = \begin{bmatrix}-7 &\!\!-3.28 &\!\!-1.03 &\!\! -1.18 &\!\!0.7\\
 0 &\!\!-9.7 &\!\!12.1 &\!\!0 &\!\!0\\
0 &\!\!-12.1 &\!\!-5.78 &\!\!-4.19 &\!\!2.49\\
0 &\!\!0 &\!\! 0 &\!\!-16.6 &\!\!6.8\\
0 &\!\!0 &\!\!0 &\!\!-6.8\!\!&\!\!-16.6
\end{bmatrix},D_k=0,\nonumber\\
B_k & = \begin{bmatrix}-0.2 & -3 & 1 & -2 & 354.9\end{bmatrix}^{\T},\label{eq:57}\\
C_{k} & = \begin{bmatrix}52.73& -178.33 & -56.3 & -64.3 & 38.2\end{bmatrix},\nonumber
\end{align}
which makes the closed-loop system matrix $\mathfrak{A}_{c}$ in \eqref{eq:eq40} Hurwitz. From \eqref{eq:57}, and by solving the LMIs in \eqref{eq:LMI_Design_Obser} we determine hybrid observer components \eqref{eq:eq45} as
$$Q=\left[Q_1, Q_2\right]^{\T}\ \text{with}~ Q_1=\begin{bmatrix}7.43 &\!\!7.44&\!\!-0.0065&\!\!0.01\end{bmatrix},$$ $$Q_2=\begin{bmatrix}0.16 & 2.81 & 5.07 & 0.28 & -0.25\end{bmatrix},~W=1.357.$$
The internal model controller parameter $G_1$ is same as in \eqref{eq:eq36} and $G_2=\begin{bmatrix}-5&-4\end{bmatrix}^{\T}$. The post-processing regulator comprising of this hybrid observer, stabilizer and internal model controller parameters, given above, is then applied to the plant \eqref{eq:plant} and the simulation results are shown in Figure \ref{fig.ep}. We observe from Fig. \ref{fig.ep} that the hybrid observer successfully estimates the regulated output and along with the actual regulated output of the plant, this also converges to zero asymptotically. Thus, the objective of the output regulation problem is achieved under sporadic measurements.
\begin{figure}
  \centering
  \includegraphics[width=1\linewidth]{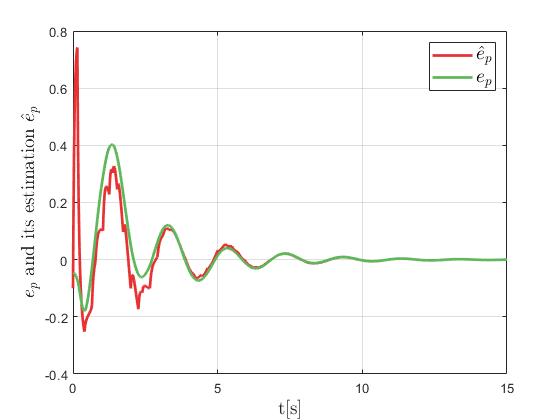} 
\caption{Regulated output of the plant and estimated regulated output by the observer.}\vspace{-3ex}
\label{fig.ep}
\end{figure}

\begin{myremark}
The main advantage introduced by the post-processing approach is that it permits to use a controller designed in a purely continuous-time setting, by extending the closed-loop system (as depicted in Figure~\ref{fig.1a}) via the proposed hybrid observer $\mathcal{O}$, to solve the output regulation problem in the presence of intermittent measurements. This, besides enabling the use of legacy controllers, it also allows to drastically simply the design of the regulator. As opposed, the approach presented in Section~\ref{sec:3}, which is based on the pre-processing paradigm, requires the design of a hybrid controller for the stabilization of the extended plant plant--internal model. This generally results into higher order controllers and a more intricate design approach.     
\end{myremark}

\section{Conclusion}\label{sec:6}
In this paper, we studied the output regulation problem for LTI plants with sporadically sampled output measurements. Due to measurements being available at sporadic time instances, we propose a hybrid regulator that is composed of a continuous-time controller fed by hybrid intersample device.  The resulting closed-loop system is modeled as hybrid dynamical system. Within this setting, we revisit two classical paradigms for output regulation problem: postprocessing and preprocessing internal model. 

In both scenarios, hinging upon Lyapunov tools for hybrid dynamical system, sufficient conditions in the form of matrix inequalities have been provided for the design of a regulator ensuring internal stability and  exponential output regulation. With the use of a continuous-time controller, We also observed that the post-processing paradigm allows greater design simplicity as compared to the pre-processing counterpart. Building upon this post-processing paradigm, in future we would like to extend this study for solving cooperative output regulation problem of multi-agent system under sporadically available measurements from the agents.
\bibliographystyle{IEEEtran}
\bibliography{ref}

\end{document}